\definecolor{DarkGreen}{rgb}{0.1,0.6,0.0}
\newcommand{\vect}[1]{{\bf {#1} }}
\theoremstyle{plain}
\newtheorem{theorem}{Theorem}
\newtheorem{lemma}[theorem]{Lemma}
\newcommand{\half}{\frac{1}{2}}
\begin{document}

\bibliographystyle{apalike}    

\title{Bathymetry and friction estimation from transient velocity data for 1D shallow water flows in open channels with varying width}


\author{Miguel Angel Moreles\footnote{Miguel Angel Moreles, Centro de Investigaci\'{o}n en Matem\'{a}ticas,  Jalisco s/n, Valenciana, Guanajuato, Gto 36240, Mexico, moreles@cimat.mx},
Gerardo Hernandez-Duenas \footnote{ Gerardo Hernandez-Duenas, Instituto de Matem\'aticas - Juriquilla, Universidad Nacional Aut\'onoma de M\'exico, Blvd. Juriquilla 3001, Quer\'etaro, 76230, M\'exico, hernandez@im.unam.mx} \footnote{Corresponding Author} ,
Pedro Gonzalez-Casanova  \footnote{Instituto de Matem\'aticas - CdMx, Universidad Nacional Aut\'onoma de M\'exico, \'Area de la Investigaci\'on Cient\'ifica, Circuito exterior, Ciudad Universitaria, 04510, M\'exico, CDMX, casanova@im.unam.mx}
}
 
\maketitle

\thanks{Research supported in part by grants UNAM-DGAPA-PAPIIT  IN113019 \& Conacyt A1-S-17634.}\\

\thispagestyle{empty}

\begin{abstract}
The shallow water equations (SWE) model a variety of geophysical flows. Flows in channels with rectangular cross sections may be modelled with a simplified one-dimensional SWE with varying width. Among other model parameters, information about the bathymetry and friction coefficient is needed for the correct and precise prediction of the flow. Although synthetic values of the model parameters may suffice for testing numerical schemes, approximations of the bathymetry and other parameters may be required for applications. Estimations may be obtained by experimental methods but some of those techniques may be expensive, time consuming, and not always available. In this work, we propose to solve the inverse problem to estimate the bathymetry and the Manning's friction coefficient from transient velocity data. This is done with the aid of a cost functional which includes the SWE through Lagrange multipliers. The solution is obtained by solving the constrained optimization problem by a continuous descent method. The direct and the adjoint problems are both solved numerically using a second-order accurate Roe-type upwind scheme. Numerical tests are included to show the merits of the algorithm.
\end{abstract}

\noindent
{\bf Keywords:}\\
Constrained optimization problems; \;  Inverse problems; \;  Shallow water equations.

\section{Introduction}

The shallow water equations (SWE), also known as the Saint-Venant system, model a variety of geophysical flows and a vast amount of applications exists in the literature. See for instance,  \cite{leveque2011tsunami}, where numerical challenges in modelling inundation of small-scale coastal regions were analyzed. Hydrodynamic modelling of open-channel flows involves the solution of 1D shallow water systems \cite{vazquez1999improved}. See \cite{bellos1992experimental,khan2014modeling} for a list of experiments in channels with different conditions. The case of 
channels with vertical walls and variable cross-sectional width is studied in \cite{balbas2009central}.

Direct problems for the above phenomena have been intensively studied during the last decades. Computing the corresponding solutions requires the knowledge of the bed channel bathymetry, appropriate initial and boundary conditions  and possibly specific model parameters such as friction's coefficients. The computation of the topography or bathymetry, needed specially in applications, may be obtained with the use of experimental methods. Examples of these techniques for river bathymetry include  interferometric synthetic aperture radar (SAR), digital photogrammetry \cite{Marks_K,Westaway} or airborne laser altimetry (LiDAR). Although these methods are the most widely used for these problems, these techniques are expensive and time consuming, and are not always available.

In this work, we focus on the signature that the bathymetry leaves on the perturbations in transient flows given by the shallow water equations in channels with vertical walls and uniform width. This leads to an alternative approach. Namely, the solution of an inverse problem to estimate the channel's bathymetry and the Manning's friction coefficient from appropriate measurements of available data. 

Research on this area is active. Let us discuss some recent contributions  relevant to the present work such as those related to hyperbolic PDE-constrained problems and specifically the shallow water equations in channels.  An optimal estimation for parameters in 1-D hyperbolic systems based on the adjoint method was proposed in \cite{Nguyen_etal_2016}. Applications to parameter estimation of a real hydrological system or overland flow with infiltration can be found in \cite{nguyen2016parameter} and \cite{nguyen2014optimal}, respectively. The problem of optimally determining the initial conditions in the shallow water equations was treated in \cite{Kevlahan_etal}, giving sufficient conditions for convergence to the true initial conditions. Monte-Carlo and gradient-based optimization methods used to calibrate the shallow water equations are studied in \cite{Lacasta_etal}. 

Not only the topography in hydrological systems can be estimated. The Manning roughness's coefficient has also been identified in the context of a complex channel network in \cite{ding2004identification,ding2005identification,ding2012optimal}. Furthermore, a general framework to deal with hyperbolic PDE_constrained optimization problems was presented in \cite{montecinos2019numerical}. A coupled system of the PDE-constrained problem and the adjoint formulation was presented, and conditions are provided to guarantee existence of an optimal solution. A direct numerical approach to reconstruct river bed topography from free surface elevation data is presented in \cite{Gessese_etal_2011}. Then, in \cite{Gessese_etal_2013} the authors consider velocity measurements and assume a steady flow. Bathymetry imaging using depth-averaged quasi-steady velocity observations are carried out in \cite{Lee_etal}.



Optimal flood control in rivers and watersheds have been successfully investigated in the literature. Among them, adjoint sensitivity analysis (ASA) based on a variational principle has been applied to find the sensitivity of hydrodynamic variables with respect to control variables in one- and two- dimensional flow models \cite{Kawahara,Sandersa,Sandersb,Sandersc,Ding}. Variational data assimilation (VDA) and adjoint sensitivity analysis (ASA) have been used for estimating unknown bathymetries of rivers and improving flood predictions by assimilating observed flow variables from measurements and satellite images. See \cite{Mazauric2003,Dimet2003,MAZAURIC2004403} for more details. A similar variational approach for Lagrangian data assimilation in rivers applied to the identification of bottom topography and initial water depth and velocity estimation was presented in \cite{Honnorata,Honnoratb,Honnoratc}). Specifically in \cite{Honnoratb}, taking into account that the velocity measurements can be scarce and that they usually require complex human interventions, the authors introduce a method which uses Lagrangian data, that can be extracted from video images, into the assimilation process. This method is applied by the authors to the identification of bed elevation and initial conditions for an academic configuration of a river hydraulic model. Such model is based on the shallow water equations. 


In \cite{P2Languer2005}, a four-dimensional variational data assimilation technique is presented as a tool to forecast floods. They modified the shallow-water equations to include a simplified sediment transport model. Their main purpose is to compute the thickness of the sediment layer bounded by the height of the solid bathymetry or bedrock and the mobile bathymetry composed of sediments.

We finally note that in \cite{P4Brisset2018}, given altimetry measurements, the identification capability of time varying inflow discharge and the Strickler coefficient (defined as a power-law in the water depth) of the 1D river Saint--Venant model is investigated. The bathymetry, however, is either provided or estimated from one in-situ measurement following \cite{GARAMBOIS2015103} and  \cite{Gessese_etal_2011}. Estimations of the bathymetry from discrete surface velocity data is currently an important and active field of research.


In this work, we are concerned with the next natural case of practical interest. Namely, the one of recovering the bed bathymetry and the Manning's friction coefficient from prescribed transient velocity data in open channels with vertical walls and varying width.  Estimating the bathymetry and the Manning's friction coefficient from transient velocity measurements is a lot more challenging compared to a situation where the flow corresponds to a steady state. To our knowledge this problem has not been addressed in the literature in the case of channels with varying width. More precisely, in this work, using a cost functional subject to the shallow water equations, we formulate an algorithm capable of recovering the bathymetry from a set of point values of the fluid's velocity in the channel. Specifically, we formulate a cost functional which includes the SWE through Lagrange multipliers. Boundary and initial conditions are included. The constrained optimization problem is solved by a continuous descent method. Namely, the gradient is computed analytically by the adjoint state method, then discretized. Assuming Fr\'echet differentiability of the functional, we characterize, analyze and obtain an explicit expression for the gradient. It is perhaps a matter of taste, but we find a neater  proof using the Fr\'{e}chet derivative instead of that of Gateaux.

The paper is organized as follows. In Section \ref{sec:MatMethods}, the shallow water model in channels with varying width, a constrained minimization approach and the derivation of the analytic gradient procedure are presented. Direct and adjoint equations are solved numerically with a second order accurate Roe-type upwind scheme, and the details are presented in  Section \ref{sec:NumMethods} together with the line search method. Section \ref{sec:TestProblems} is devoted to benchmark problems, and the numerical performance and efficiency of the method are studied. A numerical sensitivity analysis starts with a case where the observed velocity corresponds to a steady state (in the absence of friction) and the exact bottom bathymetry consists of a bump plus a sinusoidal perturbation. This simple setting is used to verify the reliability of the algorithm before transient flows with shock waves are considered in further tests. Finally, we invert both the bathymetry and the Manning's friction coefficient simultaneously in a transient flow and we also present a numerical test involving wet-dry states motivated by experimental data. In all cases the approximated bathymetry is very accurate. Conclusions are presented in the last section. Details of the derivation of the gradient is left to Appendix \ref{sec:AppendixGradient}.


\section{Materials and methods}
\label{sec:MatMethods}

\subsection{The shallow water model}

\begin{figure}[h!]
\begin{center}
{\includegraphics[width=0.49\textwidth]{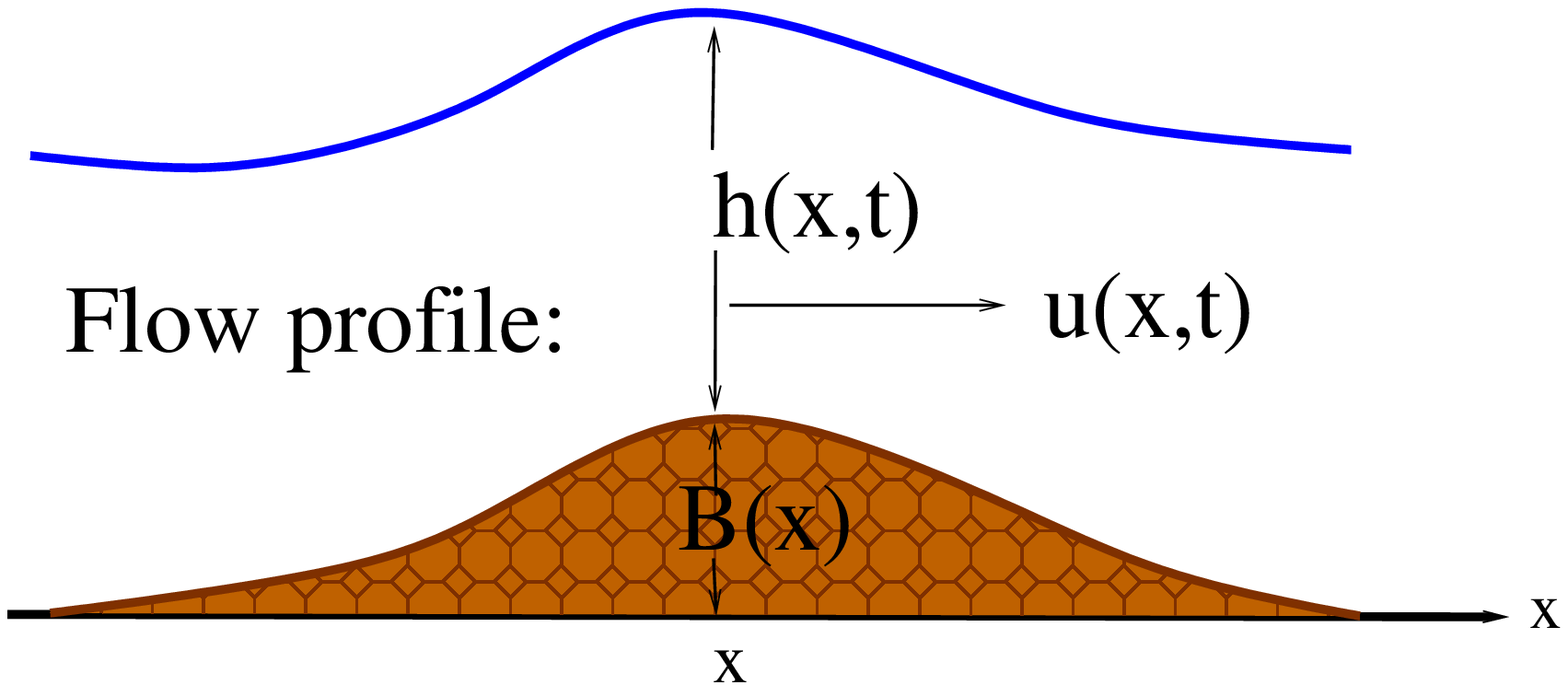}}
{\includegraphics[width=0.4\textwidth]{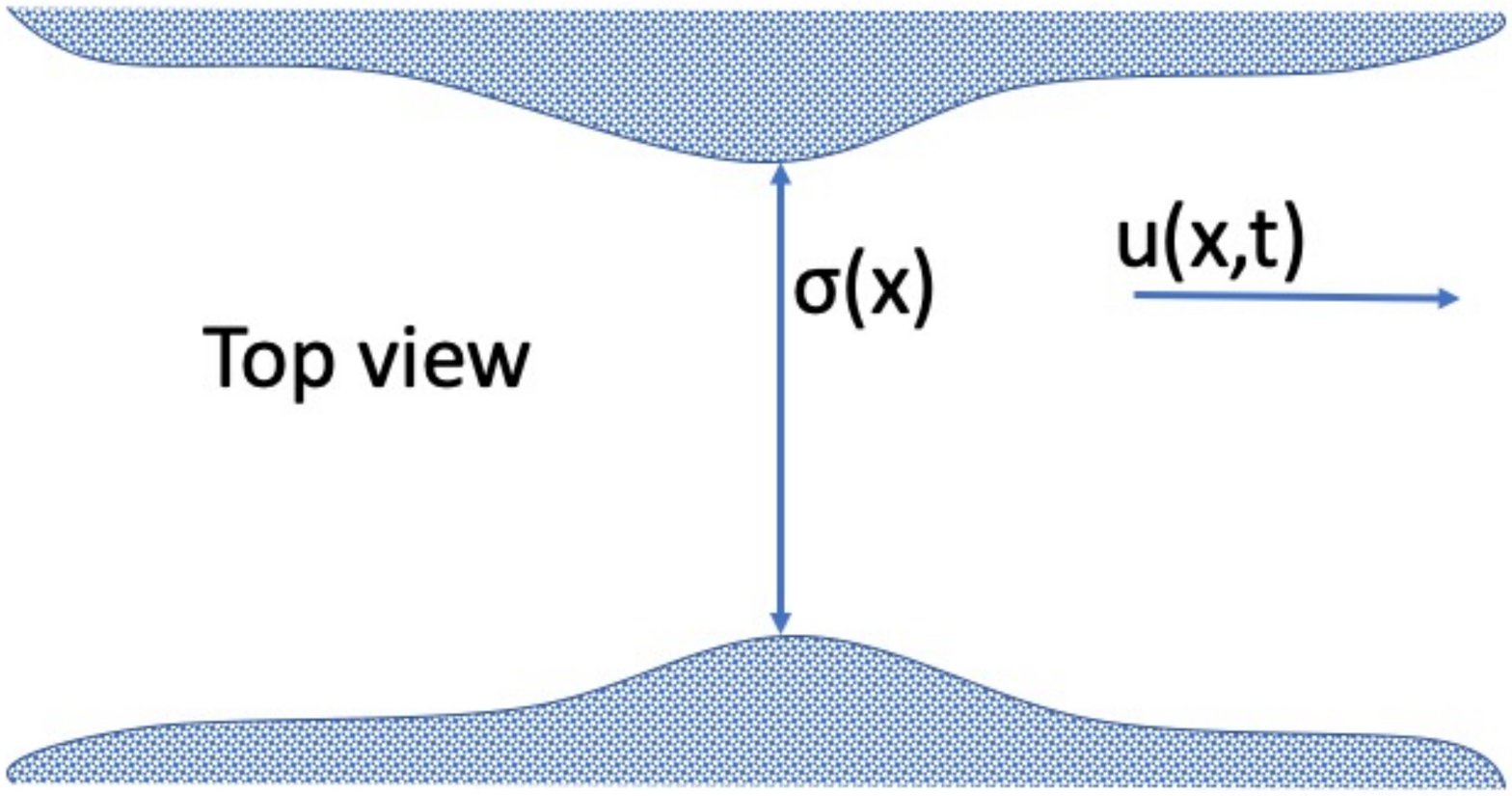}}
\end{center}
\caption{\label{fig:SWSchematic} Schematic of the shallow water model. The left panel shows the flow profile along the channel. The right panel shows the top view.}
\end{figure} 


The direct problem consist of the 1-D SWE for rectangular channels with varying width and friction, which is written as a hyperbolic balance law as
\begin{equation}
\label{eq:SWSigma}
\begin{pmatrix}
\sigma h \\
\sigma hu
\end{pmatrix}_t + 
\begin{pmatrix}
\sigma hu \\
\sigma hu^2 +\frac{g}{2}\sigma h^2 
\end{pmatrix}_x = 
\begin{pmatrix}
0 \\
\frac{1}{2}gh^2\sigma_x-g\sigma h B_x -gn^2\frac{\sigma h}{R^{4/3}}u\vert u\vert
\end{pmatrix}, \; \;  a < x < b,  \; \;  t > 0.
\end{equation}

Here $h$ is the depth of the layer, $u$ the velocity, $B(x)$ the bathymetry, $\sigma(x)$ the channel's width at position $x$, $g=9.81 \text{ ms}^{-2}$ the acceleration of gravity, $n$ is the Manning's friction coefficient, and $R= \frac{ \sigma h}{ \sigma + 2 h}$ the hydraulic radius. The hydraulic radius is the ratio of the wet area and the wetted perimeter. See \cite{khan2014modeling} for more details on the friction term.  Figure \ref{fig:SWSchematic} shows the schematic of the model. The velocity is in units of meters per second, while $x,h,\sigma$ are in units of meters and time in seconds. The friction coefficient is in units of $\text{ s}\text{ m}^{-1/3}$.

The adjoint problem is solved backwards in time. In order to have a well posed problem, appropriate initial and boundary conditions are imposed. The details are included in Section \ref{sec:TestProblems}.

\subsection{A constrained minimization approach and the analytic gradient}

There exists a variety of techniques to measure the velocity in open channels. See for instance \cite{bolognesi2017measurement} where river discharge estimations and measurements of velocity using an aircraft system are analyzed. The inverse problem to solve is stated as follows. 

For simplicity, let us assume that the cross-sectional velocity $u$ is measured at the points $(x_j,t_k),$ $j=1,2,\ldots N,$  $k=1,2,\ldots K$. Namely, $u(x_j,t_k)\approx \hat{u}_{j,k}$. The inverse problem of interest consists in estimating the bathymetry $B\equiv B(x)$, and the Manning's friction coefficient $n$, given this velocity data. For that end, let us introduce the least square functional, 
\begin{equation}
\begin{array}{rcl}
J:L^2(a,b)\times \mathbb{R} & \rightarrow & \mathbb{R} \\
(B,n) & \mapsto & J(B,n),
\end{array}
\end{equation}
given by
\begin{equation}
J(B,n) = \frac{1}{2} \sum_{j,k} \left( u(x_j,t_k,B)-\hat u_{j,k} \right)^2.
\end{equation}
Our goal is to minimize $J$ constrained to $h,u$ solving the shallow water system \eqref{eq:SWSigma}.


The constrained optimization problem is solved by a continuous descent method. Namely, the gradient is computed analytically by the adjoint state method, then discretized. Let us define the (linear) observation operator
\begin{equation}
\label{eq:ObsOp}
\mathcal{M}u=\lbrace u(x_j,t_k) \rbrace\in\mathbb{R}^{J\times K}.
\end{equation}
Let us also consider the Lagrangian
\begin{equation}
\begin{array}{l}
\mathcal L(B,n,h,u,\lambda,\mu)   =   \frac{1}{2}\Vert \mathcal{M}u-\hat{u}\Vert^2 + \\
   \\
 \left\langle 
\left( 
\begin{array}{c}
\lambda \\ 
\mu
\end{array}
\right),
\left( 
\begin{array}{c}
(\sigma h)_t+(\sigma hu)_x \\
(\sigma hu)_t+(\sigma hu^2+\frac{g}{2}\sigma h^2)_x-\frac{g}{2}h^2\sigma_x+g\sigma h B_x +gn^2\frac{\sigma h}{R^{4/3}}u\vert u\vert
\end{array}
\right)
\right\rangle_{L^2(\Omega\times (0,T))}, 
\end{array}
\end{equation}
where $\Omega=(a,b)$, and $\lambda,\mu$ are Lagrange multipliers. Here $\langle \cdot, \cdot \rangle_{L^2(\Omega\times (0,T))}$ is the normalized inner product in $L^2(\Omega\times (0,T))$, given by
\[
\langle f, g \rangle_{L^2(\Omega\times (0,T))} = \frac{1}{(b-a) T} \int_0^T \int_a^b f(x,t) \overline{g(x,t)} dx dt.
\]

\bigskip

Assuming Fr\'{e}chet differentiability of $J$, the next result yields an  expression for the gradient.

\begin{theorem}
\label{th:Lagrange}
Let $B$ and $n$ be given, and let $h,u$ solve the shallow water equations. Suppose that the velocity measurements are taken in space and time $(x,t)\in [a,b]\times [0,T]$ for $T>0.$ Furthermore, assume the Lagrange multipliers solve the adjoint equations
\begin{equation}
\label{eq:AdjointEqn}
\begin{array}{c}
\sigma \lambda_t+\sigma u\lambda_x
+\sigma u\mu_t+(\sigma u^2 +g\sigma h)\mu_x 
+\left(gh\sigma_x-g\sigma B_x  -
gn^2\left(\frac{1}{h}+\frac{2}{\sigma}\right)^{1/3}\left(2-\frac{1}{3}\frac{\sigma}{h}\right) u\sqrt{u^2+\varepsilon}\right) \mu
 =  0 \\
\sigma h \mu_t+2\sigma hu\mu_x+\sigma h \lambda_x
-gn^2\frac{\sigma h}{R^{4/3}}\frac{2u^2+\varepsilon}{\sqrt{u^2+\varepsilon}}\mu
 =  \mathcal{M}^*(\mathcal{M}u-\hat{u}) ,
\end{array}
\end{equation}
with final and boundary conditions
\begin{equation}
\label{eq:LambdaInitCond}
\lambda(x,T)=0,\quad \mu(x,T)=0,\quad x\in(a,b),
\end{equation}
and
\begin{equation}
\label{eq:MuInitCond}
\lambda(b,t)=0,\quad \mu(b,t)=0.\quad t\in(0,T).
\end{equation}
Then the Fr\'{e}chet derivative of the functional $J$ is
\[
DJ(B,n)(\xi_1,\xi_2)= \langle\xi_1,-\int_0^T(g\sigma h\mu)_x\, dt\rangle\,
+ \langle\mu,2g \; n  \frac{\sigma h}{R^{4/3}} u\sqrt{u^2+\varepsilon}\rangle\xi_2.
\]
Consequently
\[
\nabla J(B,n) = \left(
\begin{array}{c}
-\overline{(g \sigma h \mu )}_x \\ 
\langle\mu,2g \; n\frac{\sigma h}{R^{4/3}}u\sqrt{u^2+\varepsilon}\rangle
\end{array}
\right)
\]
\end{theorem}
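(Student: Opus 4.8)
The plan is to derive the gradient by the standard Lagrangian/adjoint-state calculation, treating $h,u$ as functions of the parameters $(B,n)$ through the constraint \eqref{eq:SWSigma}. First I would observe that because $h,u$ solve the shallow water system, the constraint term in $\mathcal L$ vanishes, so $J(B,n)=\mathcal L(B,n,h,u,\lambda,\mu)$ for \emph{any} choice of multipliers $\lambda,\mu$. Differentiating this identity with respect to $(B,n)$ in the Fr\'echet sense and applying the chain rule, one gets
\[
DJ(B,n)(\xi_1,\xi_2)=\partial_B\mathcal L\,\xi_1+\partial_n\mathcal L\,\xi_2
+\partial_h\mathcal L\,[Dh(B,n)(\xi_1,\xi_2)]+\partial_u\mathcal L\,[Du(B,n)(\xi_1,\xi_2)],
\]
where the partials are the Fr\'echet derivatives of $\mathcal L$ with the other arguments frozen. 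The idea is then to choose $\lambda,\mu$ precisely so that the two terms involving the (hard to compute) sensitivities $Dh$ and $Du$ cancel; this requirement is exactly what the adjoint system \eqref{eq:AdjointEqn}--\eqref{eq:MuInitCond} encodes.

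Next I would compute $\partial_h\mathcal L$ and $\partial_u\mathcal L$ explicitly. Writing the residual operator as $F(B,n,h,u)$ (the two components inside the inner product in $\mathcal L$), one has $\partial_h\mathcal L\,[\eta]=\langle(\lambda,\mu),\,\partial_hF\,[\eta]\rangle$ and similarly for $u$, plus the contribution $\langle\mathcal M u-\hat u,\mathcal M(\cdot)\rangle=\langle\mathcal M^*(\mathcal Mu-\hat u),\cdot\rangle$ from the misfit term in the $u$-variation. Integration by parts in $x$ and $t$ moves all derivatives off the perturbations $\eta$ and onto $\lambda,\mu$; the boundary terms at $t=0$ are killed because the variations respect the (fixed) initial data of \eqref{eq:SWSigma}, and the terminal/lateral boundary terms are killed by imposing $\lambda(\cdot,T)=\mu(\cdot,T)=0$ and $\lambda(b,\cdot)=\mu(b,\cdot)=0$, which is the origin of \eqref{eq:LambdaInitCond}--\eqref{eq:MuInitCond}. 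Setting the resulting bulk coefficients of $\eta_h$ and $\eta_u$ to zero yields exactly the adjoint PDEs \eqref{eq:AdjointEqn}; here one must carefully linearize the friction term $gn^2\frac{\sigma h}{R^{4/3}}u|u|$ in $h$ and in $u$ (with $|u|$ regularized as $\sqrt{u^2+\varepsilon}$, so $\partial_u(u\sqrt{u^2+\varepsilon})=\frac{2u^2+\varepsilon}{\sqrt{u^2+\varepsilon}}$ and, using $R^{-4/3}=(\tfrac1h+\tfrac2\sigma)^{4/3}$, the $h$-derivative produces the factor $(\tfrac1h+\tfrac2\sigma)^{1/3}(2-\tfrac13\tfrac\sigma h)$ appearing in the first adjoint equation).

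With that choice of $\lambda,\mu$, only $\partial_B\mathcal L$ and $\partial_n\mathcal L$ survive. Differentiating the constraint term directly in $B$ gives $\langle\mu,\,g\sigma h\,\xi_{1,x}\rangle$; integrating by parts in $x$ (the boundary terms vanish since $\mu(b,\cdot)=0$ and one assumes $\xi_1$ or $\mu$ controlled at $x=a$) produces $\langle\xi_1,\,-(g\sigma h\mu)_x\rangle$, and carrying the time integral through the normalized inner product gives the stated $\langle\xi_1,-\int_0^T(g\sigma h\mu)_x\,dt\rangle$. Differentiating in $n$ gives $\langle\mu,\,2gn\frac{\sigma h}{R^{4/3}}u\sqrt{u^2+\varepsilon}\rangle\xi_2$ directly, with no integration by parts needed. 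Reading off the Riesz representers in $L^2(a,b)\times\mathbb R$ yields the claimed $\nabla J$. The main obstacle is the bookkeeping in the integration-by-parts step — correctly collecting all boundary contributions in both $x$ and $t$ and verifying they vanish under \eqref{eq:LambdaInitCond}--\eqref{eq:MuInitCond} — together with the careful linearization of the nonlinear flux and friction terms; the use of the Fr\'echet derivative (rather than Gateaux) streamlines the justification that the implicitly-defined map $(B,n)\mapsto(h,u)$ is differentiable so that the chain rule above is legitimate.
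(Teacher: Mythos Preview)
Your proposal is correct and follows essentially the same Lagrangian/adjoint-state route as the paper: write $J(B,n)=\mathcal L(B,n,h(B,n),u(B,n),\lambda,\mu)$, apply the chain rule, integrate by parts to move derivatives onto $\lambda,\mu$, impose the final/boundary conditions \eqref{eq:LambdaInitCond}--\eqref{eq:MuInitCond} together with the vanishing of $H,U$ at the data locations to kill the boundary terms, and set the bulk coefficients of $H,U$ to zero to obtain the adjoint system, leaving only the $B$- and $n$-derivatives of $\mathcal L$. The only points the paper makes slightly more explicit are that $H(a,t)=U(a,t)=0$ (since inflow data at $x=a$ is prescribed, not just the initial data at $t=0$) and that the $\xi_1$ boundary term at $x=a$ is handled by first taking $\xi_1\in C_c^\infty(\Omega)$ and then invoking density in $L^2(\Omega)$.
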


\bigskip

Here, $\overline{(\cdot)}$ denotes the time average
\[
 \bar f(x) = \frac{1}{T} \int_0^T f(x,t) dt. 
\]

The proof of this theorem is left to Appendix \ref{sec:AppendixGradient}. 

We note that if the direct problem is solved for times $t\in [0,T]$ with initial conditions at $t=0$ as specified in Section \ref{sec:TestProblems}, the adjoint problem has zero final conditions at time $t=T$ and the solution is found backwards in time from $t=T$ to $t=0$.

\section{Numerical Methods}
\label{sec:NumMethods}

\subsection{A Roe-type scheme for the hyperbolic systems \eqref{eq:SWSigma} and \eqref{eq:AdjointEqn}}

A wide variety of numerical schemes have been proposed to solve the shallow water equations. Such schemes use different approaches and satisfy desirable properties for different goals. In \cite{garcia2000numerical}, a Roe-type well-balanced numerical scheme is proposed. That is, it exactly preserves steady sates at rest, adding accuracy when computing near steady-state flows. The central-upwind scheme presented in \cite{kurganov2007second} satisfies both the well-balance and the positivity-preserving properties. That is, it recognizes steady states at rest and it maintains the positivity of layer's depth over time. Many other approaches have also been studied. We refer the interested reader to the above works and references therein for more information. \\

\noindent
{\bf The direct problem}

In quasilinear form, it is well known (see e.g. \cite{balbas2009central}) that system \eqref{eq:SWSigma} can be written as 
\begin{equation}
\label{eq:MatrixA}
\vect W_t + 
A
\vect W_x = 
\vect S
\end{equation}
where
\begin{equation}
\label{eq:Variables}
\vect W = 
\begin{pmatrix}
\sigma h \\
\sigma hu
\end{pmatrix}, 
A = 
\begin{pmatrix}
0 & 1 \\
c^2-u^2 & 2 u
\end{pmatrix} ,
\text{ and }
\vect S =
\begin{pmatrix}
0 \\
-g \sigma h B_x + g h^2 \sigma_x -\frac{g n^2 A}{R^{4/3}} |u| u
\end{pmatrix}
\end{equation}
are the vector of conserved variables, the coefficient matrix, and the vector of source terms respectively. The coefficient matrix has eigenvalues $\lambda_1 = u-c, \lambda_2 = u+c$, and corresponding eigenvectors $r_1 = (1, u-c )^T$ and $r_2 = (1,u+c)^T$, where $c=\sqrt{gh}$. As a result, system \eqref{eq:SWSigma} is conditionally hyperbolic provided $h > 0$. Hyperbolicity is lost when $h = 0$ in a dry state. 

Roe-type upwind schemes were first introduced in \cite{roe1981approximate}. The numerical scheme requires the computation of a Roe matrix $\bar A (\vect W_\ell, \vect W_r)$ for any left and right states $\vect W_\ell$ and $\vect W_r$. The flux $\vect F = \vect F(W,\sigma) = \left( \sigma hu , \sigma hu^2 +\frac{g}{2}\sigma h^2 \right)^T$ of the model in conservation form \eqref{eq:SWSigma} depend explicitly on the solution variables but also on the model parameter $\sigma$. For such flux, the Roe matrix $\bar A(\vect W_\ell, \vect W_r)$ must satisfy $\bar A(\vect W_\ell, \vect W_r) \to A(\vect W)$ as $\vect W_\ell, \vect W_r \to \vect W$, it must have real eigenvalues with a complete set of eigenvectors, and
\begin{equation}
\Delta \vect F = \bar A(\vect W_\ell,\vect W_r) \Delta \vect W + \left( 0 , -g \bar{h}^2 \Delta \sigma/2 \right)^T,
\end{equation}
where $\Delta \vect F = \vect F(\vect W_r)-\vect F (\vect W_\ell)$, $\Delta \vect W = \vect W_r  - \vect W_\ell$, $\Delta \sigma = \sigma_r - \sigma_\ell$, and $\bar{h}$ is an approximation of $h$ between the left and right states. One such matrix is given by the following Roe linearizations
\[
\bar u = \frac{\sqrt{\sigma_\ell h_\ell} \; u_\ell + \sqrt{\sigma_r h_r} \; u_r}{\sqrt{\sigma_\ell h_\ell}+\sqrt{\sigma_r h_r}},\; \bar h = \frac{\sqrt{\sigma_\ell}\; h_\ell + \sqrt{\sigma_r} \; h_r }{\sqrt{\sigma_\ell} + \sqrt{\sigma_r} }  \text{, and } \bar c = \sqrt{g \bar h}.
\]

In \cite{garcia2000numerical}, a Roe-type upwind scheme is derived with the aid of a convenient discretization of the source terms that balance the flux gradients for steady states at rest. That is, the numerical scheme is well balanced. See \cite{hubbard2000flux} for more details. In order to extend it here for the case of channels with varying width, one possible discretization of the source terms is given by
\[
\Delta x \;  \overline{\vect S} = 
\begin{pmatrix}
0 \\
-g \bar{\sigma} \bar h \Delta B + g \bar h^2 \Delta \sigma - \frac{g n^2 \bar{\sigma} \bar{h}}{\bar{R}^{4/3}} |\bar u | \bar u_,
\end{pmatrix},
\]
where
\[
\bar \sigma = \sqrt{\sigma_\ell \sigma_r}, \text{ and } \bar R= \frac{\bar \sigma \bar h}{\bar \sigma + 2\bar h}, \Delta B = B_r-B_\ell, \Delta \sigma = \sigma_r-\sigma_\ell.
\]

Finite differences of the conserved variables and the linearized source terms are decomposed in terms of the eigenvectors $\bar{ \vect  r}_{1} =  (1,\bar u - \bar c )^T$, $\bar{ \vect  r}_{2} =  (1,\bar u + \bar c )^T$ as
\[
\Delta \vect W  = \alpha_{1} \bar { \vect  r}_{1 }+ \alpha_{2} \bar{\vect  r}_{2}, \; \; 
\Delta x \widehat{\vect S} = \beta_{1} \bar{ \vect r}_{2}+ \beta_{1} \bar{ \vect r}_{2},
\]
where
\[
\begin{array}{lcllcl}
\alpha_{1} & = &  \frac{- \Delta (\sigma h u) + (\bar u + \bar c)  \Delta (\sigma h )}{2 \bar c}, & \beta_{1} &  = & \frac{\bar c \bar \sigma}{2} \Delta B -\frac{ \bar c \bar h}{2} \Delta \sigma + \frac{n^2 \bar c \bar \sigma}{2 \bar R^{4/3}} \; \Delta x \; |\bar u| \bar u, \\
\alpha_{2} & = &  \frac{\Delta (\sigma h u) - (\bar u - \bar c)  \Delta (\sigma h)}{2 \bar c}, & \beta_{2} &  = & -\frac{\bar c \bar \sigma}{2} \Delta B + \frac{ \bar c \bar h}{2} \Delta \sigma - \frac{n^2 \bar c \bar \sigma}{2 \bar R^{4/3}} \; \Delta x \; |\bar u| \bar u.
\end{array}
\]

More details on the implementation of the numerical scheme can be found in \cite{roe1987upwind}. For the sake of completeness, we include the details here. We denote by $\vect W_{j+\half}$ the Roe averages between the states $\vect W_j$ and $\vect W_{j+1}$ in the domain with cells $I_{j}=[x_{j-\half},x_{j+\half}]$, $x_{j\pm \half}= x_j \pm \Delta x/2$. The second order numerical scheme is given by 
\begin{equation}
\label{eq:Upwind}
\begin{array}{lcl}
\vect W_j^{k+1} & = &  \vect W_j^k - \frac{\Delta t}{\Delta x} \sum_{\lambda_{j-1/2,p}^k > 0} (\alpha_{j-1/2,p}^k \lambda_{j-1/2,p}^k-\beta_{j-1/2,p}^k )\; \bar{ \vect r}_{j-1/2,p}^k \\
 & & -\frac{\Delta t}{\Delta x} \sum_{\lambda_{j+1/2,p}^k < 0} ( \alpha_{j+1/2,p}^k \lambda_{j+1/2,p}^k -\beta_{j+1/2,p}^k ) \; \bar{ \vect r}_{j+1/2,p}^k \\
& & -\sum_{p=1}^2 \frac{\Delta t}{\Delta x} \frac{1}{2} \phi \left( \theta_{j+1/2,p}^k \right) \left( \text{sign}(\nu_{j,p}^k) - \nu_{j,p}^k \right) \left( \alpha_{j+1/2,p}^k \lambda_{j+1/2,p}^k-\beta_{j+1/2,p}^k \right)\;  \bar{ \vect r}_{j+1/2,p}^k \\
 & & +\sum_{p=1}^2 \frac{\Delta t}{\Delta x} \frac{1}{2} \phi \left( \theta_{j-1/2,p}^k \right) \left( \text{sign}(\nu_{j-1,p}^k) - \nu_{j-1,p}^k \right) \left( \alpha_{j-1/2,p}^k \lambda_{j-1/2,p}^k-\beta_{j-1/2,p}^k \right)\;  \bar{\vect r}_{j-1/2,p}^k .
 \end{array}
\end{equation}
Here, $\phi(\theta) = \max( 0, \max (\min (2 \theta,1) , \min( \theta , 2 ) ) )$ is known as the superbee limiter function, and for each cell $I_j$, we define
\[
\theta_{j+1/2,p} = \frac{\alpha_{j,p}\lambda_{j,p}-\beta_{j,p}}{\alpha_{j',p}\lambda_{j',p}-\beta_{j',p}}, \; j' = j-\text{sign}(\lambda_{j,p}), \; \nu_{j,p} = \frac{\Delta t}{\Delta x}\lambda_{j,p}.
\]
The last two terms in equation \eqref{eq:Upwind} are the second order corrections. See \cite{leveque1992numerical} for more details, where the reader can also find the sonic entropy fix that is usually done for Roe-type upwind schemes and that has also been implemented here. In the case where $\lambda_{j-1,p} < 0 < \lambda_{j,p}$, $\lambda_{j-1/2,p}$ in the first term of equation \eqref{eq:Upwind} is replaced by $\lambda_{j-1/2,p}^r = \lambda_{j,p} \; (\lambda_{j-1/2,p}-\lambda_{j-1,p})/(\lambda_{j,p}-\lambda_{j-1,p})$. Symmetrically, if $\lambda_{j,p} < 0 < \lambda_{j+1,p}$, $\lambda_{j+1/2,p}$ in the second term of equation \eqref{eq:Upwind}, $\lambda_{j+1/2,p}$ is replaced by $\lambda_{j+1/2,p}^\ell = \lambda_{j,p} \; (\lambda_{j+1,p}-\lambda_{j+1/2,p})/(\lambda_{j+1,p}-\lambda_{j,p})$.\\

\noindent
{\bf The adjoint problem}

The adjoint problem can be written as
\begin{dmath}
\begin{pmatrix}
\lambda \\
\mu
\end{pmatrix}_t
+
\begin{pmatrix}
0 & c^2 - u^2 \\
1 & 2 u
\end{pmatrix}
\begin{pmatrix}
\lambda \\
\mu
\end{pmatrix}_x
=
\begin{pmatrix}
-\frac{u}{\sigma h} \mathcal M^* (\mathcal M u - \hat u) + \left[ -\frac{g h}{\sigma} \sigma_x + g B_x  +g n^2 u \left( \frac{ 2 h -\sigma/3 }{\sigma h \; R^{1/3}} \sqrt{u^2+\epsilon} - \frac{2 u^2+\epsilon}{R^{4/3} \sqrt{u^2+\epsilon}} \right)\right] \mu \\
\frac{1}{\sigma h} \mathcal M^* (\mathcal M u - \hat u) + \frac{g n^2}{ R^{4/3}} \frac{2u^2+\epsilon}{\sqrt{u^2 + \epsilon}} \mu
\end{pmatrix},
\end{dmath}
where $\hat u $ is the observed velocity in space and time. We note that $\lambda$ has units of velocity and $\mu$ is non-dimensional. On the other hand, $M^* (\mathcal M u - \hat u)$ is given in units of squared velocity. 

As noted in Theorem \ref{th:Lagrange}, the final conditions in equations \eqref{eq:LambdaInitCond} and \eqref{eq:MuInitCond}, are given at time $t=T$ and the solution is computed backwards in time from $t=T$ to $t=0$. In addition, we note that the coefficient matrix
\[
A^* =
\begin{pmatrix}
0 & c^2 - u^2 \\
1 & 2 u
\end{pmatrix}
\]
is the transpose of the coefficient matrix in the direct problem. The  eigenvalues are the same and the corresponding eigenvectors are
\[
\vect r_1^* = 
\begin{pmatrix}
-\bar u - \bar c \\
1 
\end{pmatrix},
\; \text{ and } \;
\vect r_2^* = 
\begin{pmatrix}
-\bar u + \bar c  \\
1 
\end{pmatrix}.
\]

Analogously to the direct problem, the finite difference of the solution variable $\Delta \vect W^* = (\Delta \lambda, \Delta \mu)^T$ and the linearized source terms 
\begin{dmath}
\Delta x \; \vect S^*=
\begin{pmatrix}
-\frac{\bar u \; \Delta x \; \mathcal \overline{ M^* (\mathcal M u - \hat u}) }{\bar \sigma \bar h } + \left[ -\frac{g \bar h}{\bar \sigma} \Delta \sigma + g \Delta B  +g n^2 \bar u \left( \frac{ 2 \bar h -\bar \sigma/3 }{\bar \sigma \bar h \; \bar R^{1/3}} \sqrt{\bar u^2+\epsilon} - \frac{2 \bar u^2+\epsilon}{\bar R^{4/3} \sqrt{\bar u^2+\epsilon}} \right)\right] \bar \mu \\
\frac{1}{\bar \sigma \bar h} \overline{ \mathcal M^* (\mathcal M u - \hat u}) + \frac{g n^2}{ \bar R^{4/3}} \frac{2 \bar u^2+\epsilon}{\sqrt{\bar u^2 + \epsilon}} \bar \mu
 \end{pmatrix},
 \end{dmath}
 where
 \[
 \bar \mu = \frac{\mu_\ell + \mu_r}{2}, ; \; \overline{  \mathcal M^* ( \mathcal M u - \hat u )} = \frac{  \mathcal M^*( \mathcal M u_\ell - \hat u_\ell) +   \mathcal M^*( \mathcal M u_r - \hat u_r )}{2}
 \]
 are decomposed as
 \[
\Delta \vect W^* = \alpha_{1}^* \bar { \vect  r}_{1}^*+ \alpha_{2}^* \bar{\vect  r}_{2}^*, \; \; 
\Delta x \widehat{\vect S}^* = \beta_{1}^* \bar{ \vect r}_{2}^* + \beta_{1}^* \bar{ \vect r}_{2}^*.
\]
Here, the coefficients in the decompositions are given by
\[
\begin{array}{lcllcl}
\alpha_1^* & = &  \frac{(-\bar u + \bar c)  \Delta \mu - \Delta \lambda}{2 \bar c}, & \beta_1^* &  = &  \frac{(-\bar u + \bar c)  \bar S_2^* - \bar S_1^*}{2 \bar c}, \\ \\
\alpha_2^* & = &  \frac{(\bar u + \bar c)  \Delta \mu + \Delta \lambda}{2 \bar c}, & \beta_2^* &  = & \frac{(\bar u + \bar c)  \bar S_2^* + \bar S_1^*}{2 \bar c}.
\end{array}
\]

The corresponding numerical scheme for the adjoint problem solved backward in time is given by
\begin{equation}
\label{eq:UpwindAdjoint}
\begin{array}{lcl}
\vect W_j^{*,k} & = &  \vect W_j^{*,k+1} + \frac{\Delta t}{\Delta x} \sum_{\lambda_{j-1/2,p}^{k+1} > 0} (\alpha_{j-1/2,p}^{*,k+1}  \lambda_{j-1/2,p}^{k+1}-\beta_{j-1/2,p}^{*,{k+1}} )\;  \vect r_{j-1/2,p}^{*,k+1} \\
 & & +\frac{\Delta t}{\Delta x} \sum_{\lambda_{j+1/2,p}^{k+1} < 0} ( \alpha_{j+1/2,p}^{*,k+1} \lambda_{j+1/2,p}^{k+1} -\beta_{j+1/2,p}^{*,k+1} ) \; \vect r_{j+1/2,p}^{*,k+1} \\
& & -\sum_{p=1}^2 \frac{\Delta t}{\Delta x} \frac{1}{2} \phi \left( \theta_{j+1/2,p}^{*,k+1} \right) \left( \text{sign}(\nu_{j,p}^{k+1}) - \nu_{j,p}^{k+1} \right) \left( \alpha_{j+1/2,p}^{*,k+1} \lambda_{j+1/2,p}^{k+1}-\beta_{j+1/2,p}^{+,k+1} \right)\;  \vect r_{j+1/2,p}^{*,k+1} \\
 & & +\sum_{p=1}^2 \frac{\Delta t}{\Delta x} \frac{1}{2} \phi \left( \theta_{j-1/2,p}^{*,k+1} \right) \left( \text{sign}(\nu_{j-1,p}^{k+1}) - \nu_{j-1,p}^{k+1} \right) \left( \alpha_{j-1/2,p}^{*,k+1} \lambda_{j-1/2,p}^{k+1}-\beta_{j-1/2,p}^{*,k+1} \right)\;  \vect r_{j-1/2,p}^{*,k+1}, 
 \end{array}
\end{equation}
where
\[
\theta_{j+1/2,p}^* = \frac{\alpha_{j,p}^*\lambda_{j,p}-\beta_{j,p}^*}{\alpha_{j',p}^* \lambda_{j',p}-\beta_{j',p}^*}.
\]


\subsection{A line search method}
\label{sec:SearchMethod}

The bathymetry and the Manning's friction coefficient are inferred iteratively. We start with an initial guess which in principle must be not too far from the target. In each step, one computes the gradient and advance in the steepest search direction. The amplitude to advance in the steepest direction is initially obtained empirically. One then modulates it to minimize the error. We continue iteratively until an error threshold is achieved. The algorithm is summarized as follows.\\
 
\noindent \textbf{Algorithm.} (Continuous descent)

\label{alg:descent}
Given a starting point $B_0$ and $n_o$, a convergence
tolerance $\epsilon$, and $k\leftarrow 0;$

\noindent 
while $\left\Vert \nabla J(B _{k},n_k)\right\Vert >\epsilon ;$

Compute the steepest search direction
\begin{equation}
\label{eq:pk}
p_{k} =  -\nabla J(B_k,n_k);
\end{equation}
Set 
\begin{equation}
\label{eq:SteepestDescent}
B _{k+1} = B _{k}+\alpha_k p_{1,k}; \; \; \;
n_{k+1}  = n_k+\alpha_k p_{2,k};
\end{equation}
$k\leftarrow k+1;$
\noindent 
end (while)\\[4pt]

Since
\[
\nabla J(B,n) = \left(
\begin{array}{c}
-\overline{(g \sigma h \mu )}_x \\ 
\langle\mu,2g\frac{\sigma h}{R^{4/3}}u\sqrt{u^2+\varepsilon}\rangle
\end{array}
\right),
\]
we have
\[
p_k= 
\begin{pmatrix}
\overline{(g \sigma h \mu )}_x \\ 
-\langle\mu,2g\frac{\sigma h}{R^{4/3}}u\sqrt{u^2+\varepsilon}\rangle
\end{pmatrix}.
\]

We recall that the overline denotes time average. The steepest search direction for the bathymetry is a function of $x$ only, and a constant for the Manning's friction coefficient. At each iteration step where $p_k$ from equation \eqref{eq:pk} is already computed, we choose the coefficient $\alpha_k$ in equation \eqref{eq:SteepestDescent} as follows. One starts with an initial value ($\alpha_k = 0.5$ here) and compute $(B_{k+1},n_{k+1})$ according to equation \eqref{eq:SteepestDescent}. One then calculates the error in the velocity with that estimated bathymetry. If the error decreases when when $\alpha_k$ is reduced by a certain factor ($0.8$ here), we keep reducing it until the error does not decrease anymore. \\

\noindent
{\bf Note:} When either the Manning's friction coefficient or the bathymetry are known, we can estimate the other parameter by considering only one of the entries in $p_k$.

\section{Test problems}
\label{sec:TestProblems}

The above technique is numerically tested in this section. The direct problem often involves bathymetries consisting of a bump or a channel's throat by which the fluid passes through. Depending on the parameter regime, the flow may accelerate/decelerate and reduce/increase its cross-sectional wet area as it passes through the bump and/or throat. We consider here different situations to show the merits and robustness of the algorithm.\\

\noindent
{\bf Numerical setup and boundary conditions}\\

The inverse problem consists of inferring the bathymetry $B$ and Manning's friction coefficient $n$ from the transient velocity $u(x,t)$. We assume that the velocity is observed at all spatial positions and at all times. That is, the velocity data is assumed to be available at all grid points and at all times. For the direct problem, we initially specify the total height $w$ and the velocity $u$. Such data is assumed to be known at $t=0$. At each step, the estimated bathymetry is used to obtain the initial depth $h=w-B$. Regarding the adjoint problem which is solved backwards in time, here we impose zero Dirichlet final conditions. 

At the left boundary, a discharge $Q_{\text{left}}$ and a surface elevation $w_{\text{left}}$ are specified at inflow and are extrapolated at outflow for the direct problem. An inflow/outflow at the left boundary occurs when the eigenvalues of the coefficient matrix are positive/negative. At the right boundary, a discharge $Q_{\text{right}}$ and a surface elevation $w_{\text{right}}$ are specified at inflow and extrapolated at outflow. An inflow/outflow at the right boundary occurs when the eigenvalues of the coefficient matrix are negative/positive. The adjoint problem is used to compute the gradient $\nabla J$. Zero Neumann boundary conditions are implemented for the adjoint variables $\lambda,\mu$. We assume we know the bathymetry elevation at the boundaries, and we prescribe them to be $B_{\text{in}}=B_{\text{out}}=0$ at both ends. 

We quantify the error and the relative error with the $L^\infty$ norm, and are given by
\[
e = \sup_x |B_{\text{approx}}(x) - B_{\text{exact}} | , \text{ and } e_{\text{rel}} = \frac{e}{\sup_x | B_{\text{exact}} (x)| },
\]
where $B_{\text{approx}}$ and $B_{\text{exact}}$ are the approximated and exact bathymetries. 

The time window $[0,T]$ where both the direct and the adjoint problems are solved need to be chosen carefully. The end time $T$ needs to be large enough to have the needed information to invert the problem. However, if $T$ is too large, it induces strong interactions with the boundary, where the bathymetry is prescribed. In any case, we have found that the bathymetry estimation is not very sensitive to the end time. We specify the end time $T$ in each numerical test. 

\subsection{Bathymetry bump}

\begin{center}
\begin{figure}[h]
\centering
\includegraphics[width=0.49 \textwidth]{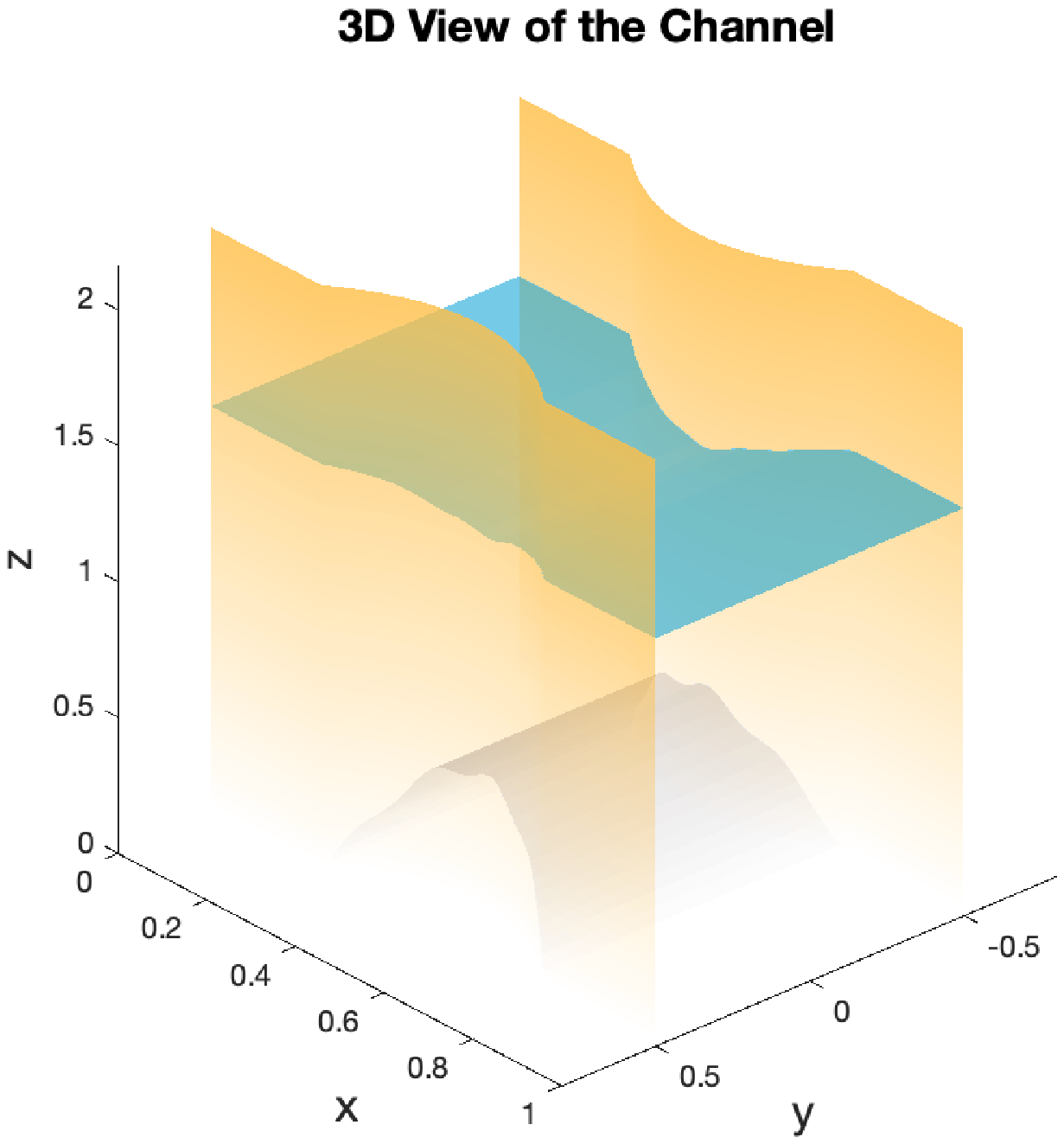}
\includegraphics[width=0.49 \textwidth]{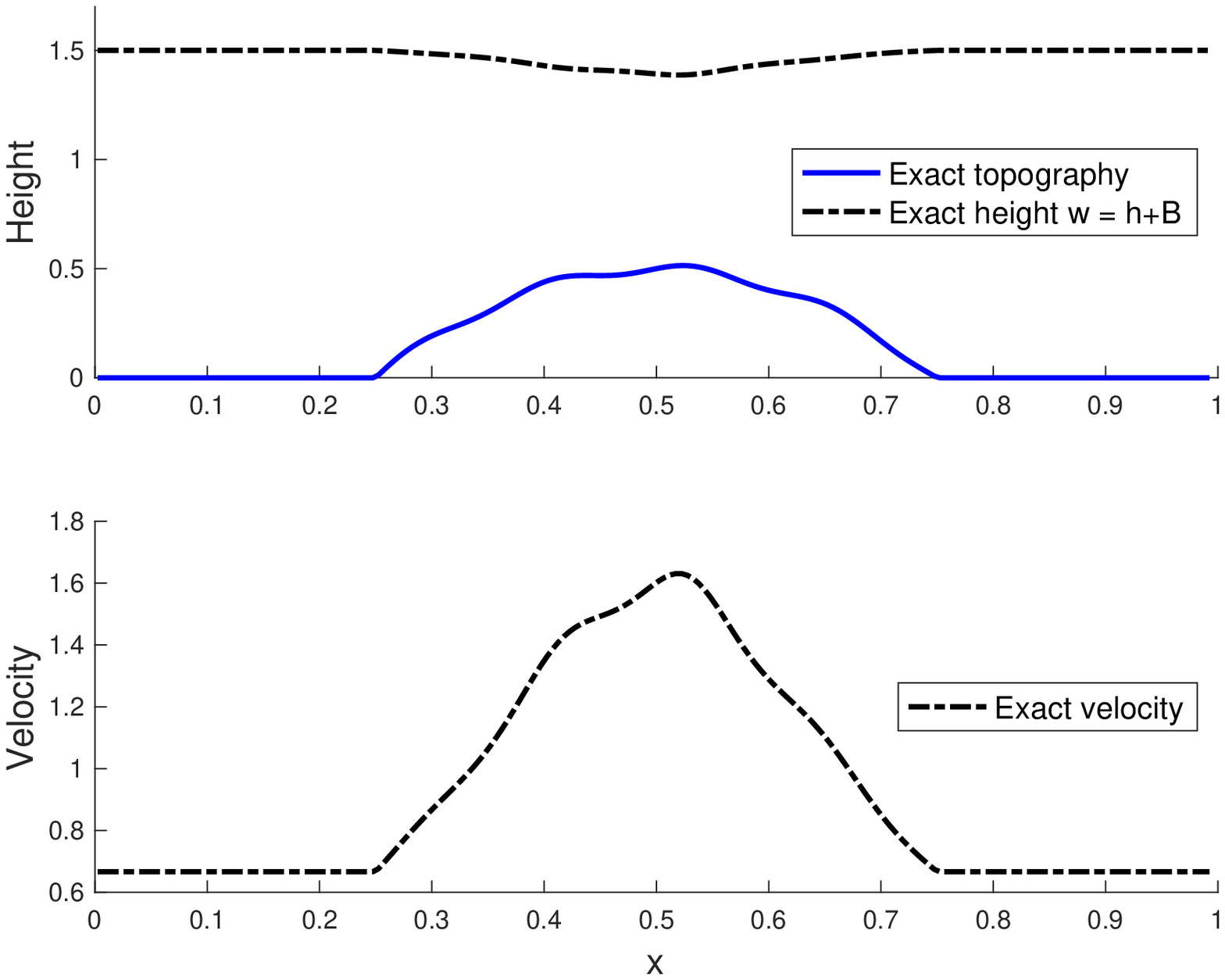}
\caption{\label{fig:BumpSinSubcr} Left panel: 3D view of the channel. Top right panel: Exact total height $w=h+B$ (blue solid line) and bathymetry (black solid line) are shown. Bottom right panel: Velocity as a function of $x$. The solution corresponds to a steady state with discharge $Q_{\text{in}} = 1$ at inflow (left boundary) and total height $w_{\text{out}} = 1.5$ at outflow (right boundary).}
\end{figure}
\end{center}

The synthetic data in this first numerical test is obtained with a particular choice of a bathymetry elevation and channel's geometry. The exact bathymetry to be estimated is given by
\begin{equation}
\label{eq:BumpBExact}
B_{\text{exact}}(x) = 
\left\{
\begin{array}{lcl}
\frac{1}{2}  \left( 1-\left( 4 \left( x-\frac{1}{2} \right) \right)^2 \right) + 0.02 \sin \left( 16 \pi \left( x - \frac{1}{4} \right) \right)  & \text{ if } x \in [0.25 , 0.75]\\
 0 & \text{ if } x \in [0,1] \setminus [0.25, 0.75].
\end{array}
\right.
\end{equation}
The channel's width is given by
\begin{equation}
\label{eq:sigma}
\sigma(x) = 2 \min \left( 2.4 ( x - 0.5)^2+0.35 , 0.5 \right). 
\end{equation}
The 3D view of the channel is shown in the left panel of Figure \ref{fig:BumpSinSubcr}. Following \cite{khan2014modeling}, the Manning's friction coefficient is fixed to $n=0.009 \text{ s}\text{ m}^{-1/3}$ in this numerical test and the bathymetry is the only model's parameter to estimate. 

We first test the algorithm in a simple setting. In particular, the velocity considered here corresponds to a subcritical smooth steady state (in the absence of friction). Smooth steady states are characterized by two invariants. Namely, the discharge $Q = hu$ and the energy $E=\frac{1}{2} u^2 + g(h+B)$ are both constant throughout the domain when the friction coefficient $n$ vanishes. One could use such invariants to estimate the bathymetry. However, the algorithm presented here is designed for transient flows as well, and the setting in this numerical experiment is meant to test the accuracy in the approximations of the bathymetry. Given the bathymetry $B$, a corresponding steady state may be computed by specifying two quantities. Here we specify the discharge $Q_{\text{in}}=1$ at inflow at the left boundary and the total height $w_{\text{out}}= B_{\text{out}}+h_{\text{out}}=1.5$ at outflow at the right boundary. Figure \ref{fig:BumpSinSubcr} shows the total height $w=h+B$ and exact bathymetry $B$ in the top right panel, while the bottom right panel shows the exact velocity that is observed. 

\begin{center}
\begin{figure}[h!]
\centering
\includegraphics[width=0.49 \textwidth]{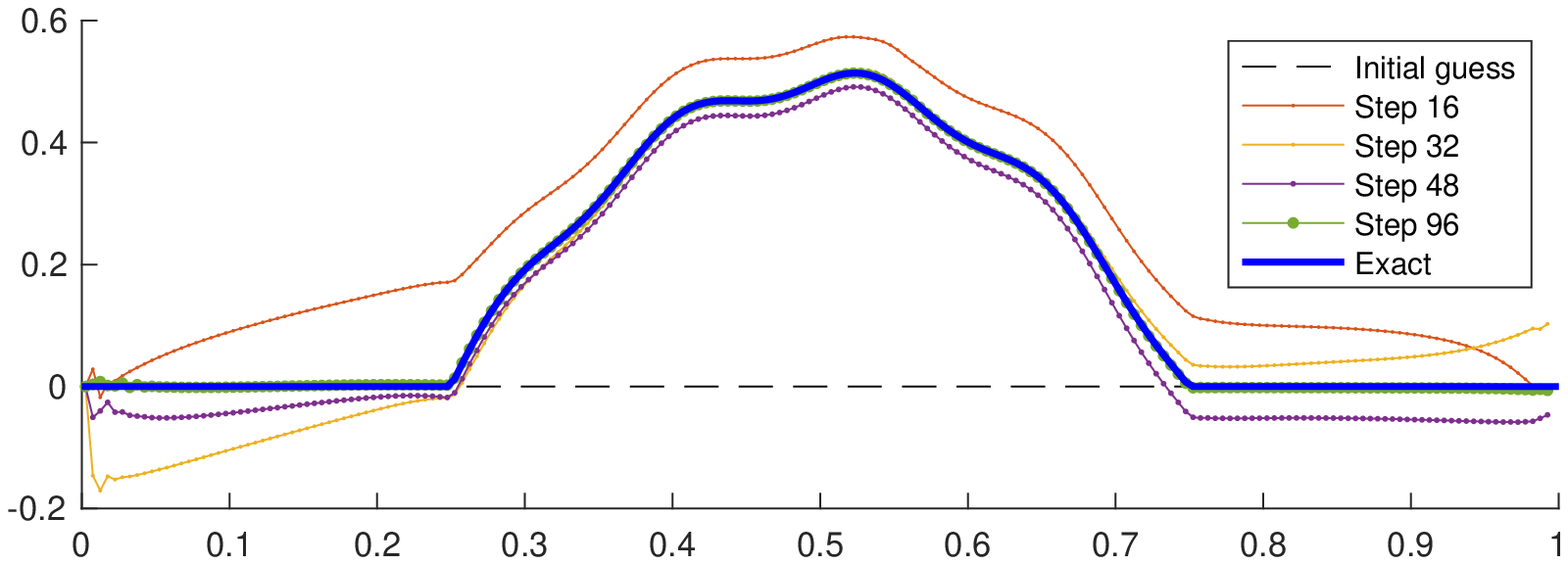}
\includegraphics[width=0.49 \textwidth]{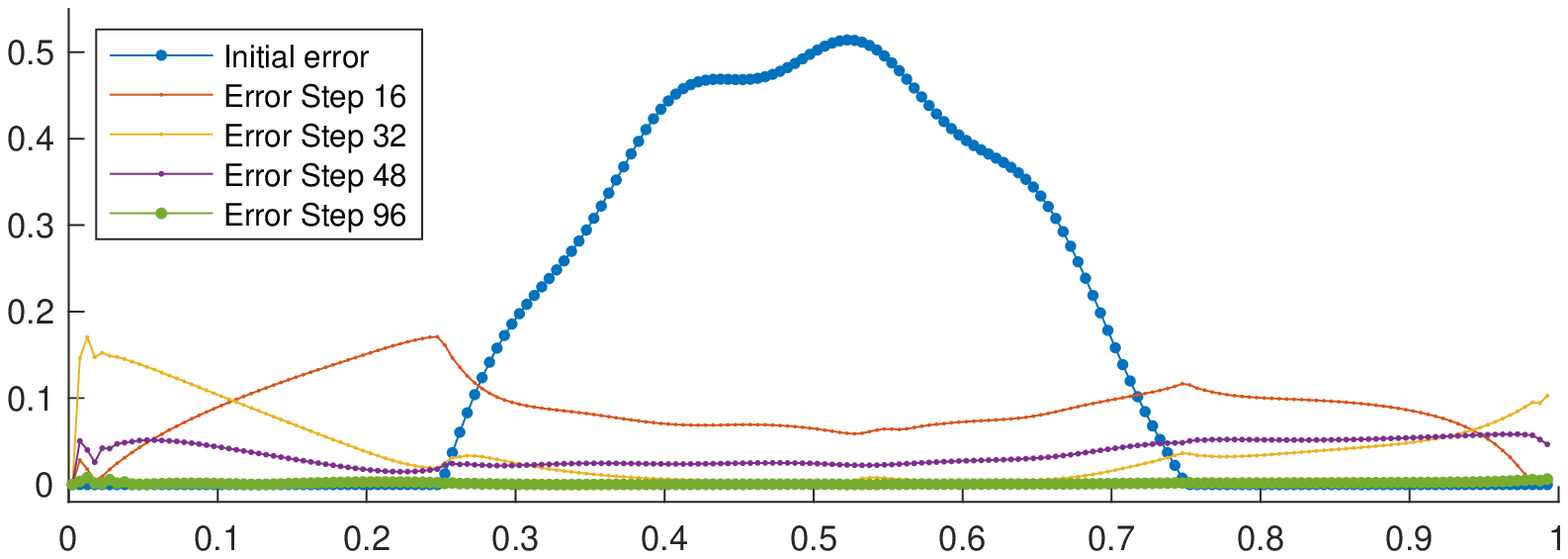}\\
\caption{\label{fig:BumpSolution} The top left panel shows the exact bathymetry (solid blue line) given by equation \eqref{eq:BumpBExact}, the initial guess $B_o = 0$ (black dashed line) and the intermediate steps (dotted lines in different colors and mark sizes). The top right panel shows the error $|B_n - B_{\text{exact}}|$ at steps $16,32,48$ and $96$.}
\end{figure}
\end{center}

In general, the initial guess needs to be close enough to the target in order to converge to the correct state. However, here we test the robustness of the algorithm by considering an initial guess $B_o = 0$. The numerical results for the inverse problem are given in Figure \ref{fig:BumpSolution}. In the left panel, the exact solution is identified with the solid blue line, while the initial guess is denoted by the black dashed line. The final approximation is computed using 96 steps in the algorithm described in Section \ref{sec:SearchMethod} and a resolution of 200 grid points. The direct and adjoint problems are solved in the time window $[0,T]$ with $T=0.2$. In the panel, we show the estimation of the bathymetry for the steps 0,16,32,48, and 96. The final step is not easy to distinguish because it is very close to the target. The error is shown in the right panel. The maximum error is $e=8.1 \times 10^{-3}$, which corresponds to a relative error of $e_{\text{rel}} = 1.56 \%$. The maximum error is located near the left boundary. Away from that region, the error is reduced to $3.1 \times 10^{-3}$, which corresponds to a relative error of $0.59 \%$.

\subsection{A transient flow}
\label{sec:TestTransientFlow}

\begin{center}
\begin{figure}[h]
\centering
\includegraphics[width=0.49 \textwidth]{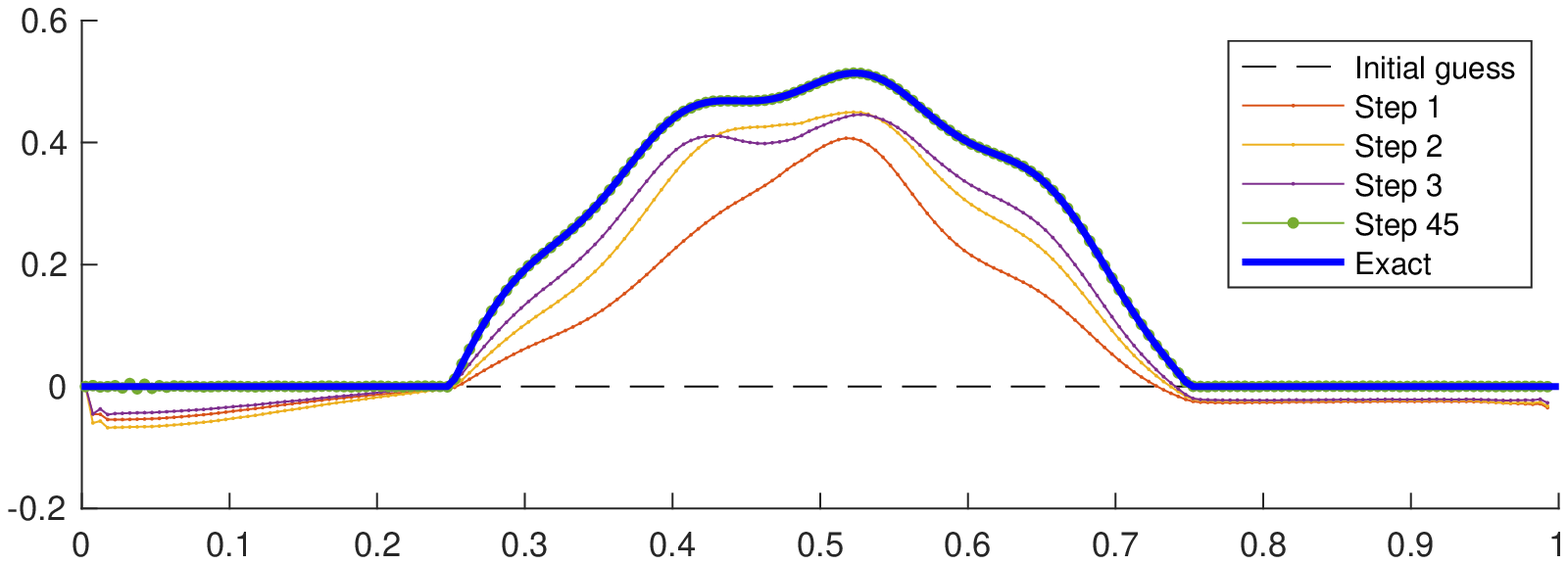}
\includegraphics[width=0.49 \textwidth]{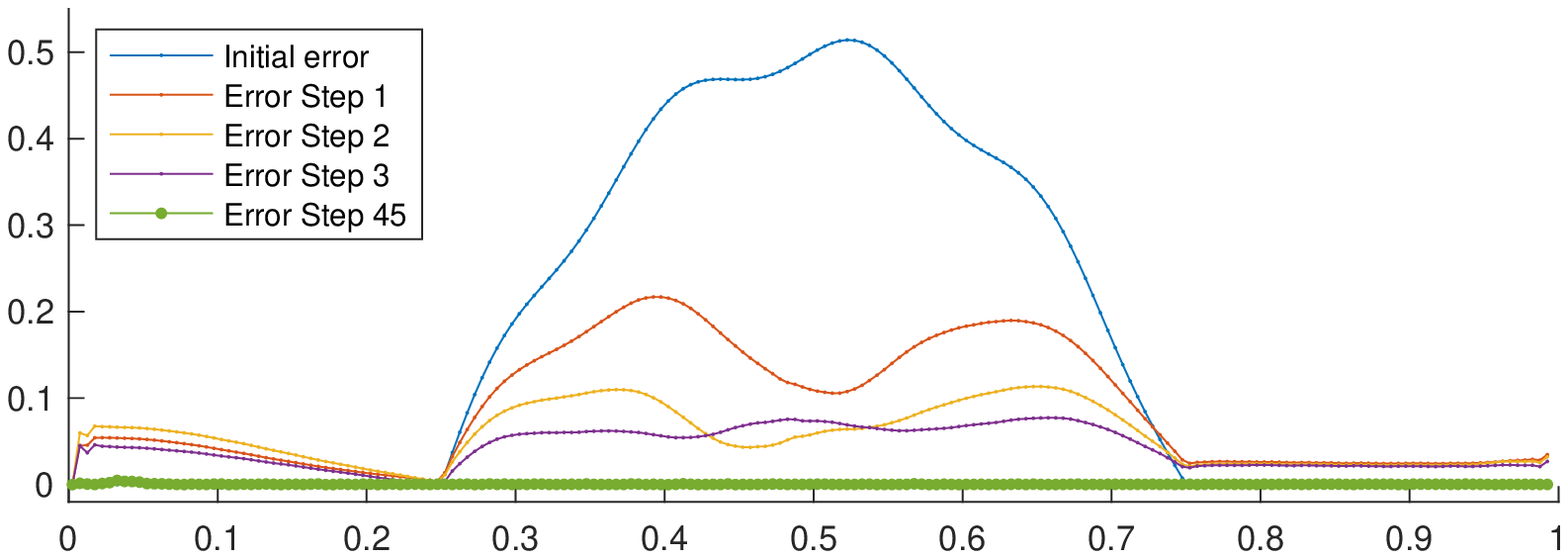}
\caption{\label{fig:BumpSinNonStat} Left panel: Exact bathymetry (blue solid line), the initial guess (black dashed line), and the intermediate steps in the algorithm (dotted lines). Right panel: Error in steps 1,2,3 and 45. }
\end{figure}
\end{center}

The ultimate goal in this work is to estimate the bathymetry in transient flows. For that end, the observed velocity in this numerical tests is time dependent and the synthetic data is constructed as follows. Given the bathymetry in equation \eqref{eq:BumpBExact}, the initial condition in the direct problem is the smooth supercritical steady state associated to the discharge $Q_{\text{steady}}=8$ and the total height $w_{\text{out}}= B_{\text{out}}+h_{\text{out}}=1$.  However, the discharge imposed at the left boundary is $Q_{\text{in}} = 9.6$, which is $20\%$ higher compared to that corresponding to the steady state. The resulting transient flow consists of a perturbation to a steady state. The right-going shockwave propagates and passes through the bump in the bathymetry. This generates a time-dependent velocity which is used as the synthetic data in the adjoint problem.

\begin{center}
\begin{figure}[h]
\centering
\includegraphics[width=0.49 \textwidth]{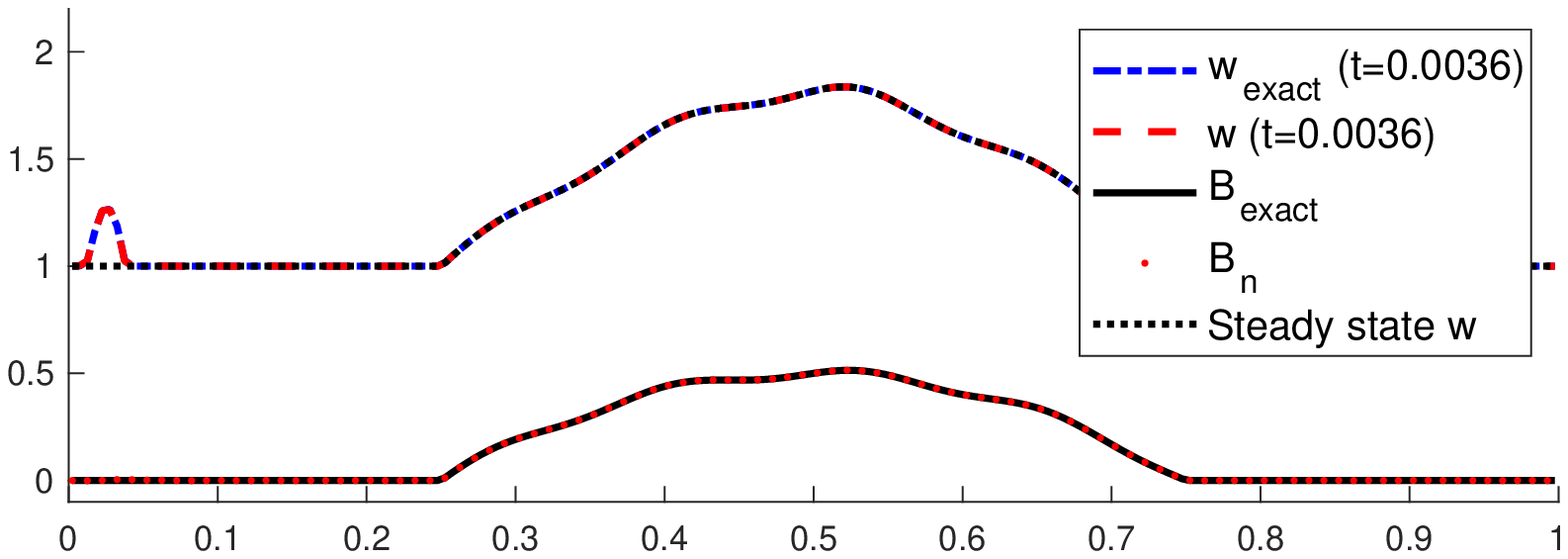}
\includegraphics[width=0.49 \textwidth]{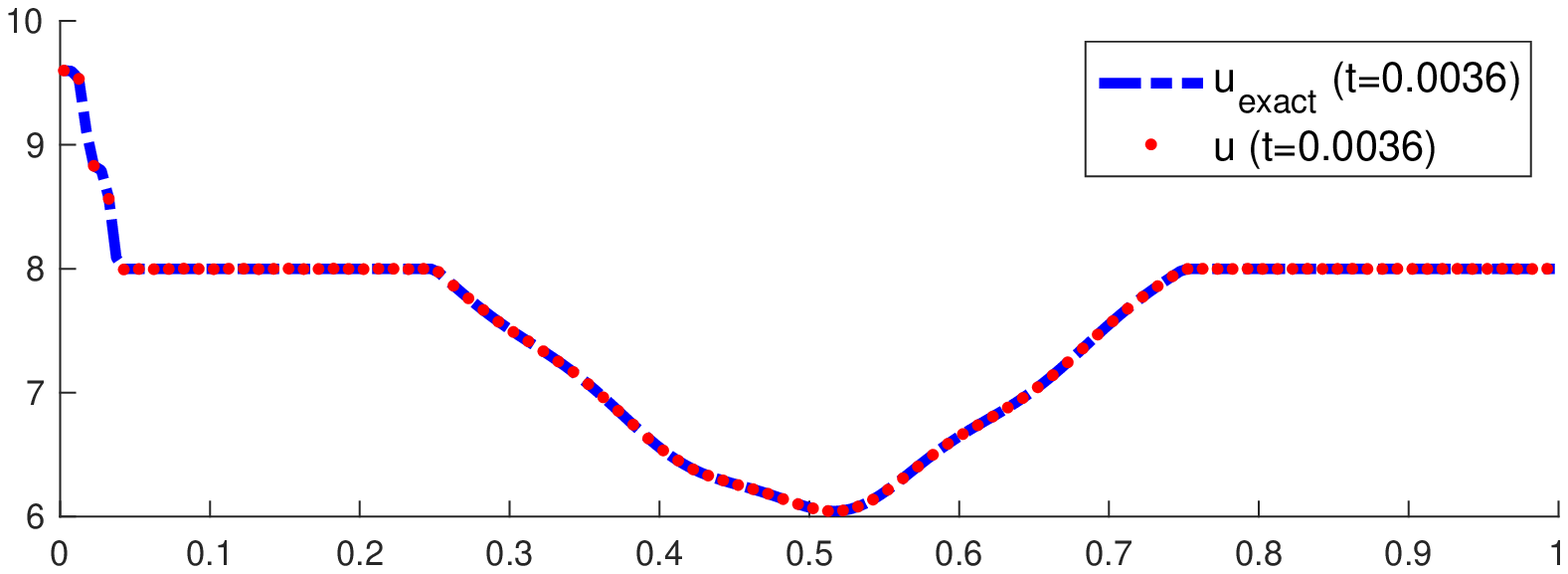}\\
\includegraphics[width=0.49 \textwidth]{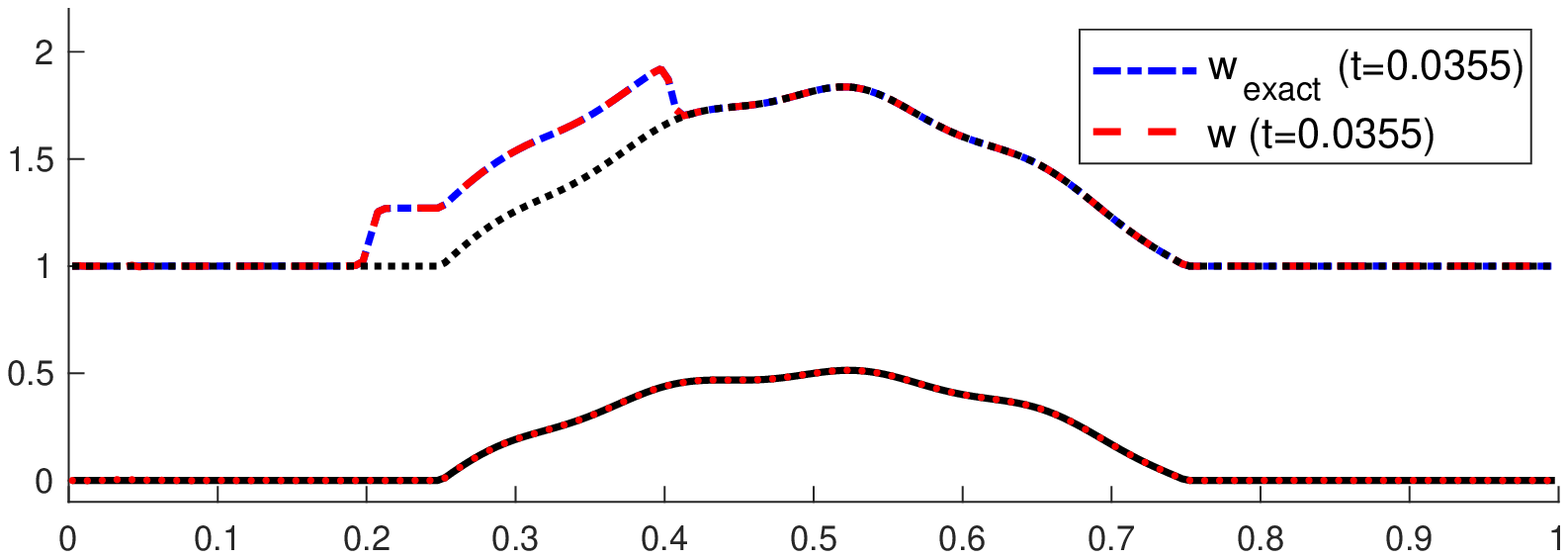}
\includegraphics[width=0.49 \textwidth]{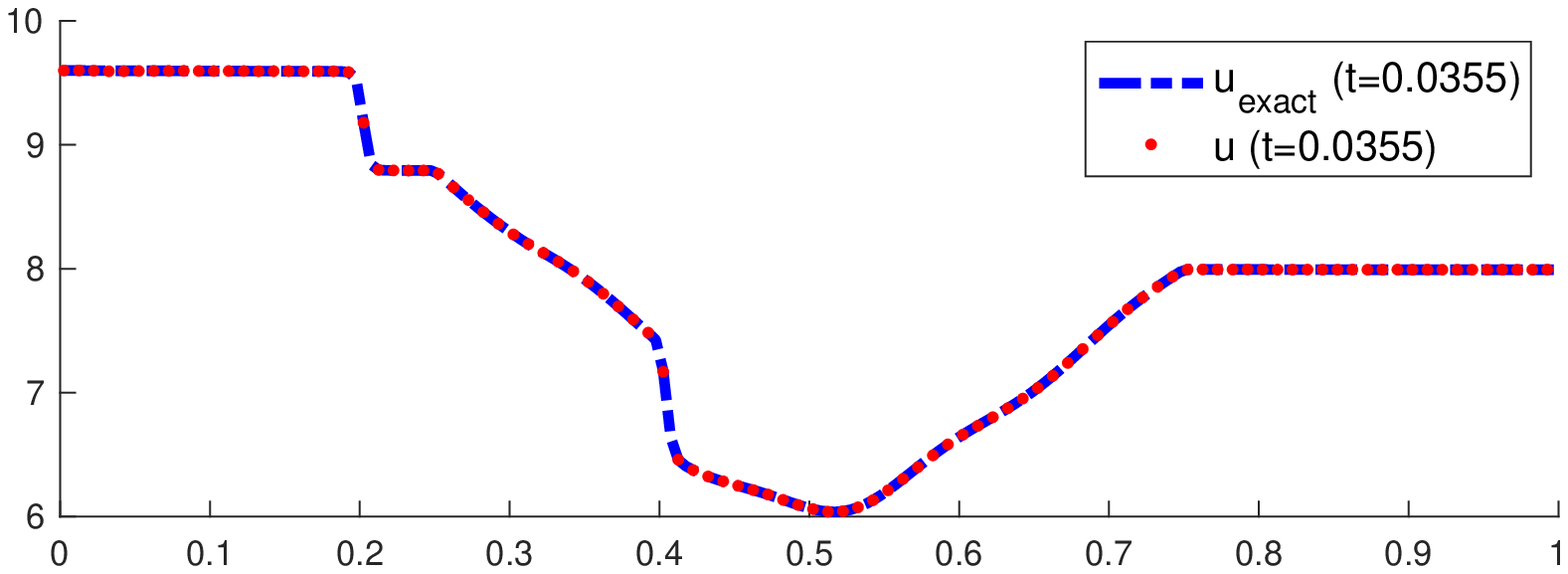}\\
\includegraphics[width=0.49 \textwidth]{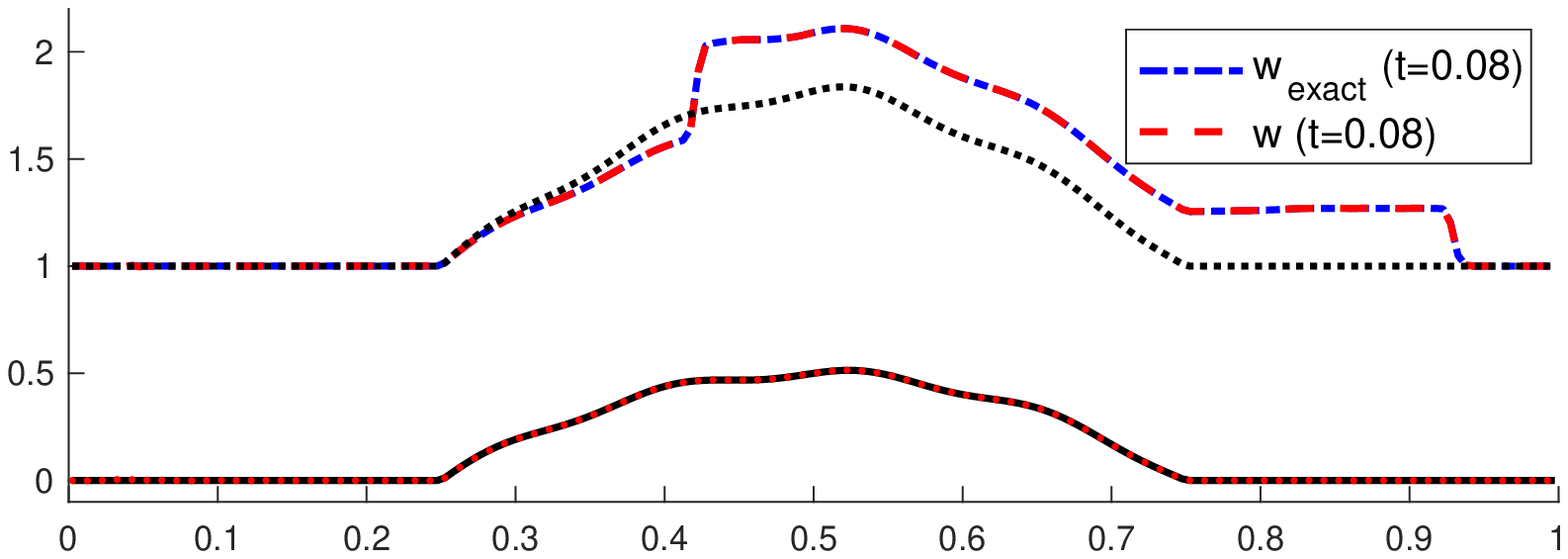}
\includegraphics[width=0.49 \textwidth]{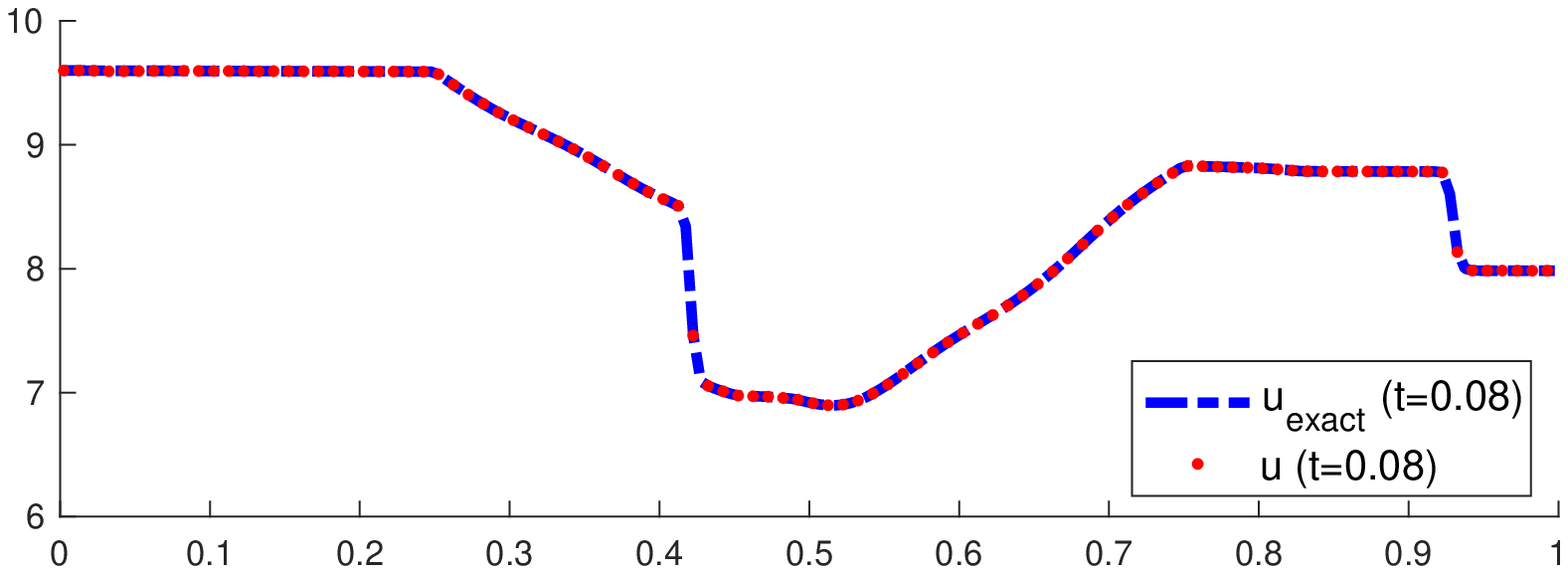}
\caption{\label{fig:TransientFlow} Left column: Approximated (blue dashed line) and exact (red dashed line) total heights at times $t=0.0036, 0.0355$, and $0.08$ in descending order, for the transient flow of Section \ref{sec:TestTransientFlow}. The exact (solid black line) and approximated (dotted red line) bathymetries are also shown. The steady state height is included to highlight the difference compared to the transient flow (dotted black line). Right column: approximated (dotted red line) and exact (dashed blue line) velocities. }
\end{figure}
\end{center}

The Manning's friction coefficient is fixed to $n=0.009 \text{ s}\text{ m}^{-1/3}$ in this numerical test and the end time $T=0.2$. Figure \ref{fig:BumpSinNonStat} shows the approximated bathymetry in steps 1,2,3, and 45 in the left panel. For completeness, the error is exhibited in the right panel.  The maximum error in the last step (away from the left boundary) is $e=8.96 \times 10^{-4}$, which corresponds to a relative error of $e_{\text{rel}} = 0.17\%$. Flows in realistic applications may not be in steady state equilibrium. Estimating the bathymetry from transient velocity measurements is a lot more challenging compared to a situation where the flow corresponds to a steady state. The present algorithm has shown to be efficient in those circumstances. 

Figure \ref{fig:TransientFlow} shows the time evolution of the total height (left column) and velocity (right column) at times  $t=0.0036, 0.03555$, and $0.08$. The total height is computed both using both the exact and the approximated bathymetries. The exact total height is denoted by the dashed blue line, while the approximated solution is identified with the dashed red line. The steady state total height is also shown for reference (black dotted line). The exact and approximated topographies are denoted by the solid blue and dotted red lines, respectively. The approximated solution is very accurate even in this time-dependent problem, and the differences are hard to be distinguished.


\subsection{Bump in a channel with varying width, friction and discontinuous top surface}
\label{sec:BumpDiscSurf}

\begin{center}
\begin{figure}[h]
\centering
\includegraphics[width=0.48 \textwidth]{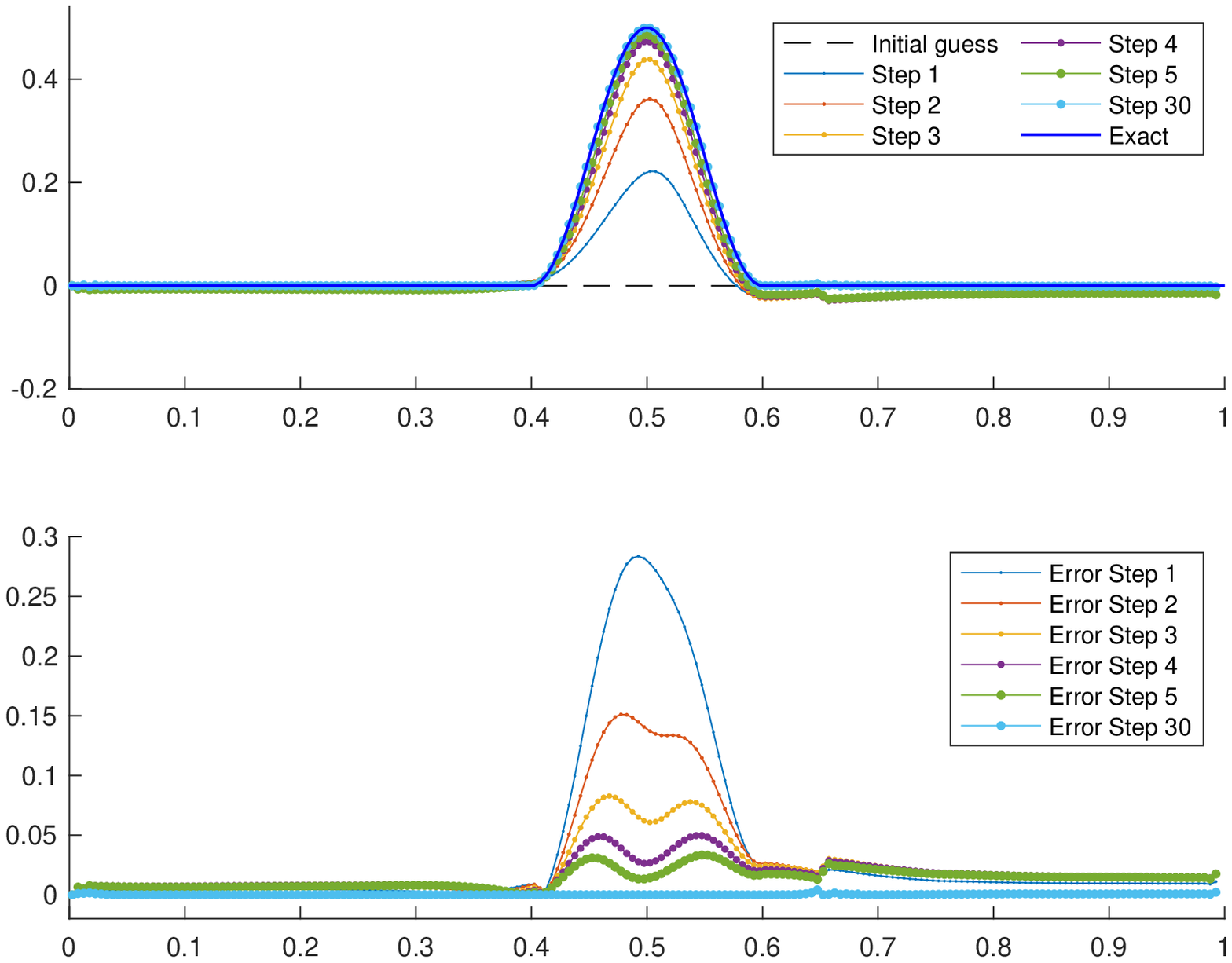}
\includegraphics[width=0.50 \textwidth]{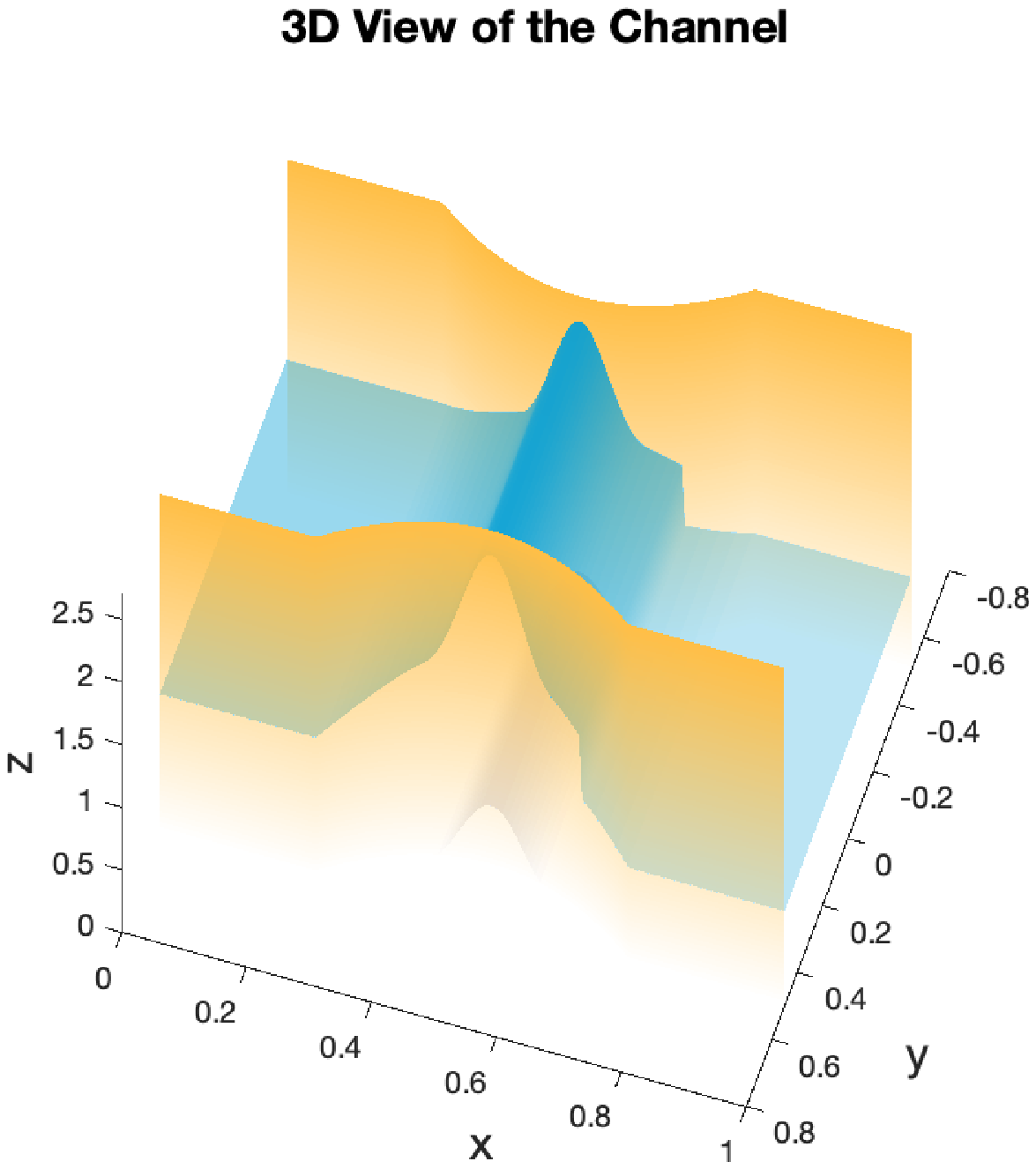}
\caption{\label{fig:InverseCodeSigmaBumpDiscSurf} Top left panel: Exact bathymetry (blue solid line), the initial guess (black dashed line), and the intermediate steps in the algorithm (dotted lines). Bottom left panel: Error in steps 1,2,3,4,5 and 30. Right panel: 3D view of the channel (yellow surface), the exact bathymetry (black surface), and the initial surface elevation $w = B+h$ (blue surface). }
\end{figure}
\end{center}

Discontinuous weak solutions of hyperbolic systems like the shallow water equations may appear in finite time. The robustness of the algorithm is tested here by estimating the bathymetry from velocity data with discontinuities. The exact bathymetry is given by
\[
B(x) =
\left\{
\begin{array}{lcl}
\frac{1}{4} \left[  \cos \left( 10 \pi \left( x-\frac{1}{2} \right) \right)  +1\right] & \text{ if } & 0.4 < x \le 0.6, \\
0 &  & \text{otherwise,}
\end{array},
\right.
\]
and the width $\sigma$ is given by equation \eqref{eq:sigma}.

The computed state in this numerical test is a steady state with shockwave in the absence of friction. However, here the Manning's friction coefficient is set to $n=0.009 \text{ s}\text{ m}^{-1/3}$, following \cite{khan2014modeling}. The initial height at the left and right boundaries are $w_{\text{in}} = 1.1, w_{\text{out}} = 0.75$. An initial shock is placed at $x_{\text{shock}}=0.65$ with left and right states given by $w_{\text{left}} = 1.417, u_{\text{left}} = 8.085,$ and $w_{\text{right}} = 0.931, u_{\text{right}} = 12.198$. The energy is initially piecewise constant, satisfying $E_{\text{in}} = 46.59$ for $x\le x_{\text{shock}}$, and $E_{\text{out}} = 83.52$ for $x > x_{\text{shock}}$. This shockwave is stationary in the absence of friction. The right panel of Figure \ref{fig:InverseCodeSigmaBumpDiscSurf} shows the 3D view of the channel (yellow surface), the bathymetry $B$ (black surface), and the initial height's elevation $w$ (blue surface). 


The top left panel of Figure \ref{fig:InverseCodeSigmaBumpDiscSurf} shows the estimated bathymetry given by the algorithm in Section \ref{sec:SearchMethod} at steps 1,2,3,4,5 and 30.  Here we use a time window $[0,T]$ with $T=0.2$. The initial state is $B_o =0$. This is significantly far from the target, which consists of a bump at the center of the domain. The first step already has a bump-like structure, with a small jump near the shockwave. At step 5, the bathymetry is close the the target and at the final step 30, the approximation is virtually on top of the exact bathymetry. The error as a function of $x$ is shown in the bottom left panel for different steps, where the convergence to the exact solution is evident. At step 30, the error is below $e=4.2 \times 10^{-3}$, which corresponds a relative error of $e_{\text{rel}} = 0.85 \%$ of the maximum bathymetry's elevation. We note that the algorithm works well even in the presence of shockwaves and friction.

\subsection{Bathymetry and Manning's friction coefficient inversion}

\begin{center}
\begin{figure}[h!]
\centering
\includegraphics[width=0.49 \textwidth]{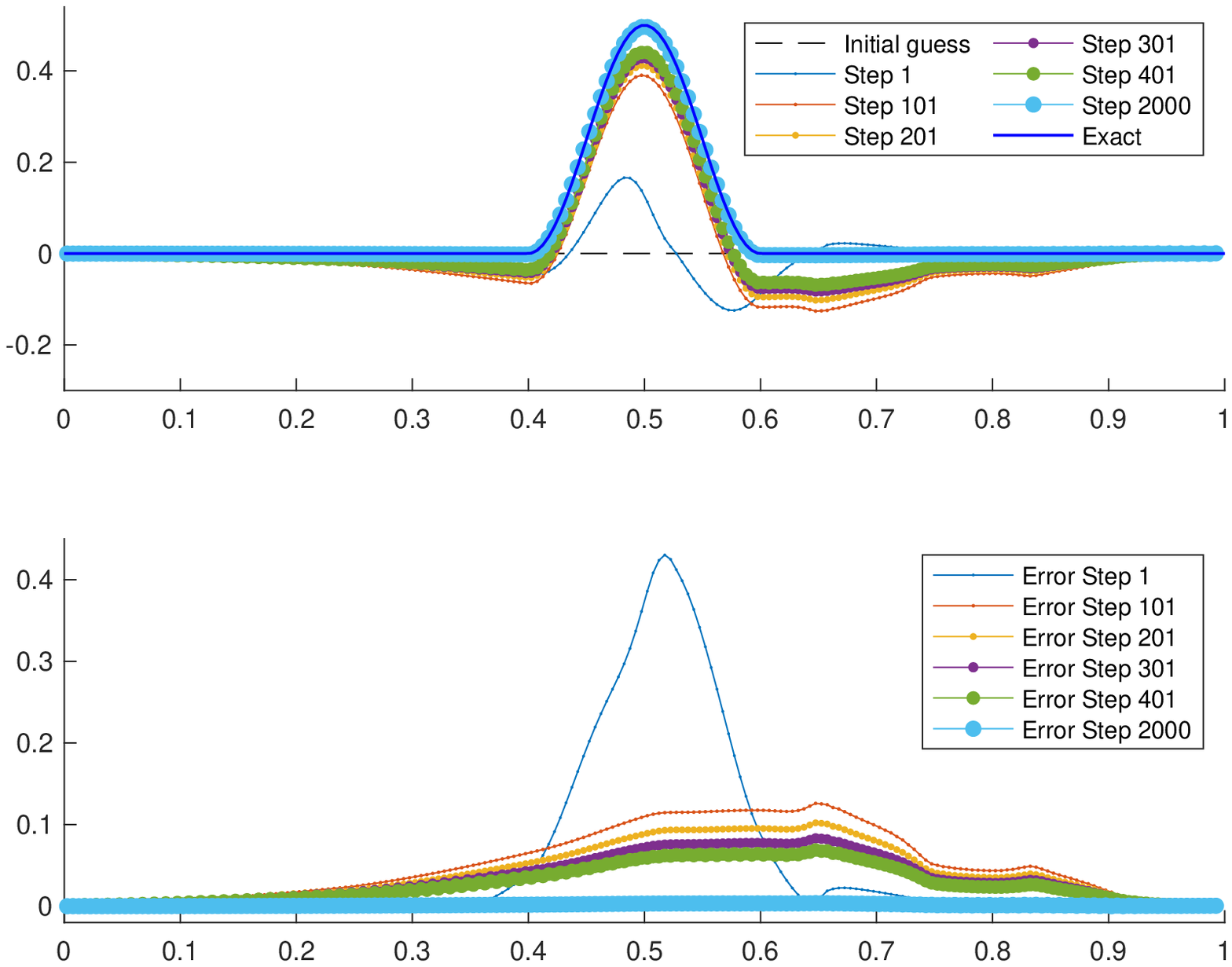}
\includegraphics[width=0.49 \textwidth]{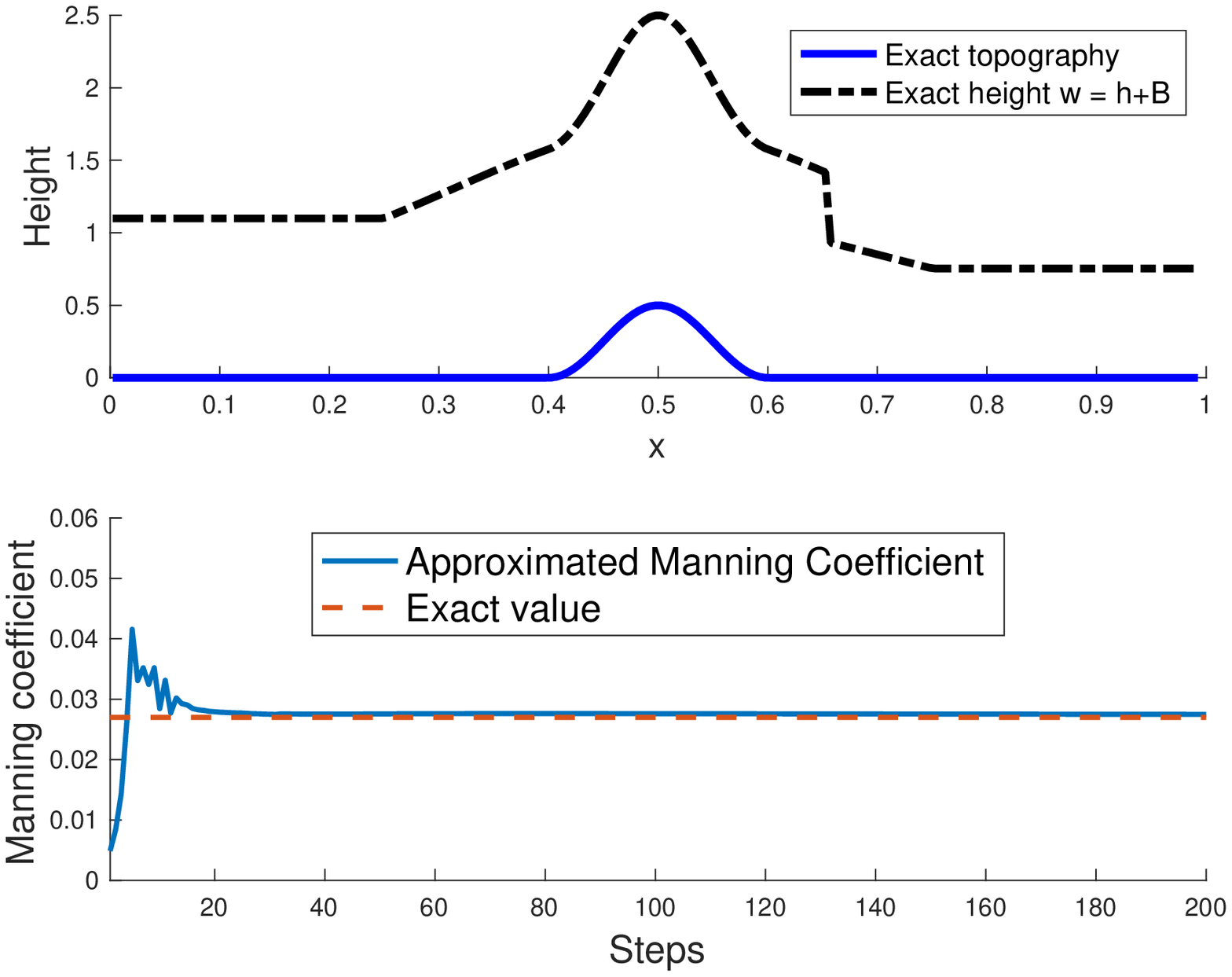}
\caption{\label{fig:InverseCodeSigmaBumpManningDiscSurf} Top left panel: Exact bathymetry (blue solid line), the initial guess (black dashed line), and the intermediate steps in the algorithm (dotted lines). Bottom left panel: Error in steps 1,101,201,301,401 and 2000. Top right panel: The exact topography (solid blue line) and the exact total height (black dashed line) are displayed. Bottom right panel: The approximated Manning's friction coefficient is shown as a function of iteration step (solid blue line), and the exact coefficient is included for reference in the dashed red line.}
\end{figure}
\end{center}

In this test, we invert both the bathymetry and the Manning's friction coefficient simultaneously. Although the value of $n$ in Section \ref{sec:BumpDiscSurf} is realistic, it may not have a strong effect in the flow in the time window considered here. As a sensitivity test, we have increased the target value of the exact Manning's friction coefficient to $n_{\text{exact}}=0.027 \text{ s}\text{ m}^{-1/3}$, which is three times larger compared to the previous one. We chose a smaller time window with $T=0.02$. However, in this case it took 2000 iteration steps to converge to the exact solution with an error of $e=3.8 \times 10^{-3}$, which corresponds to a relative error of $e_{\text{rel}} = 0.77 \%$ of the maximum bathymetry's elevation.

Using the same symbols as in Figure \ref{fig:InverseCodeSigmaBumpDiscSurf}, the top left panel of Figure \ref{fig:InverseCodeSigmaBumpManningDiscSurf} shows the exact bathymetry, and the approximated bathymetry at the initial and intermediate steps 101,201,301, 401 and 2000. The error is shown in the bottom left panel. Although it took many more steps, the error at the final step is very small. 

We can simultaneously estimate both the bathymetry's elevation and the Manning's friction coefficient in the algorithm in Section \ref{sec:SearchMethod}. The second component of the gradient $\nabla J$ has the approximated friction coefficient $n$ as a factor itself. As a result, the initial value cannot be zero because it represents an equilibrium value in the algorithm. We set the initial value of the Manning's friction coefficient as $n_o = 0.0027 = \frac{1}{10} n_{\text{exact}}$. The bottom right panel of Figure \ref{fig:InverseCodeSigmaBumpManningDiscSurf} shows the estimated Manning's friction coefficient as a function of step  number. The estimated value is already close to the exact value after about 20 steps. We only show 200 steps to see the variations in the early steps. However, the plot for 2000 steps (not shown) shows a convergence to the exact value. 

The top right panel of Figure \ref{fig:InverseCodeSigmaBumpManningDiscSurf} shows the bathymetry $B$ and the initial surface elevation with a shockwave used in the present numerical test and the previous Section \ref{sec:BumpDiscSurf}. The algorithm provides very accurate results in flows with or without friction, steady or transient states, with initial guesses that are significantly far from the exact solutions.

\subsection{Manning's coefficient inversion in the presence of wet-dry states} 
\label{sec:DamBreakReal}

The numerical test in this last section is motivated by laboratory experiments of dam breaks conducted in converging/diverging channels. See for instance, Chapter 5 of the book \cite{khan2014modeling} for a list of experiments in channels with different bed slopes and different wet and dry conditions. The experiments in \cite{khan2014modeling} Section 5.3.4 were taken from \cite{bellos1992experimental}. The channel has vertical walls and width variations along the $x$-axis, approximately given by the graph in the left panel of Figure \ref{fig:DamBreakRealCaseSigmaHeight}. The channel's length is 21.2 m, and its width is 1.4 m from 0 to 5 m, and from 16.8 to 21.2 m. The minimum width is 0.6 m at $x_m = 8.5 \text{ m}$.  

 \begin{figure}[h!]
\begin{center}
{\includegraphics[width=0.39 \textwidth]{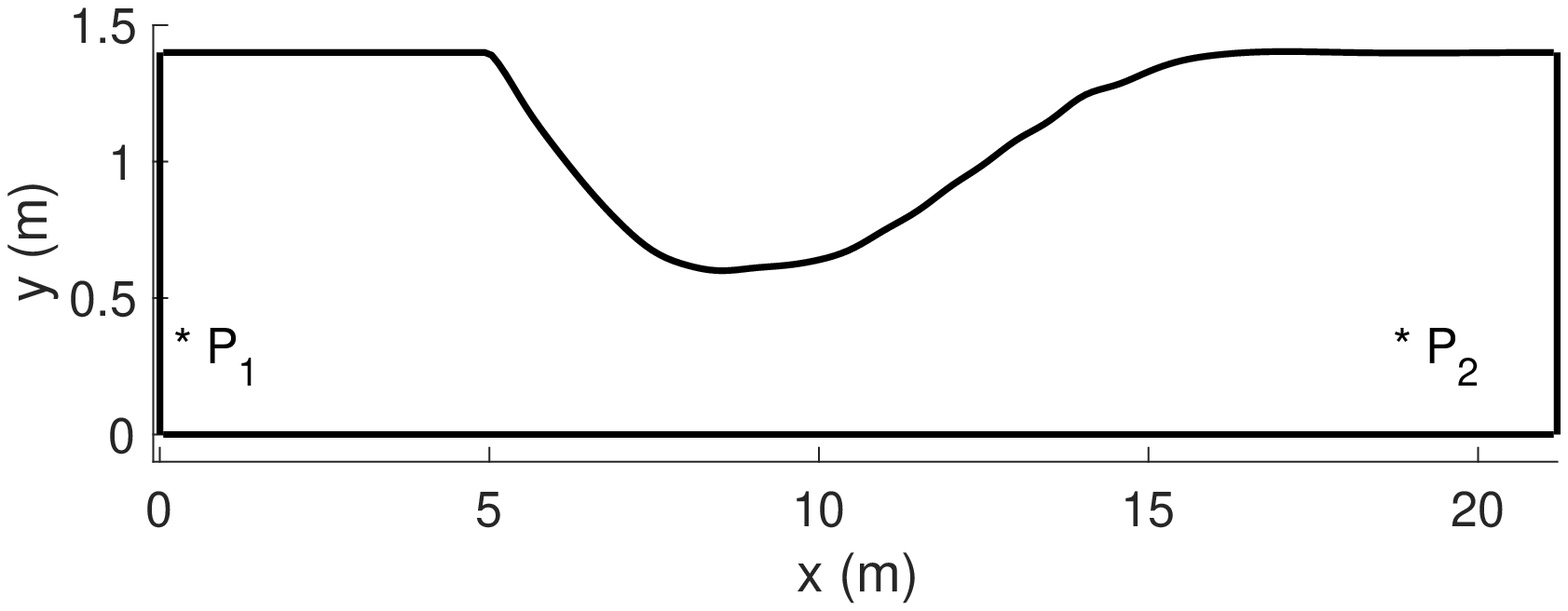}}
{\includegraphics[width=0.59 \textwidth]{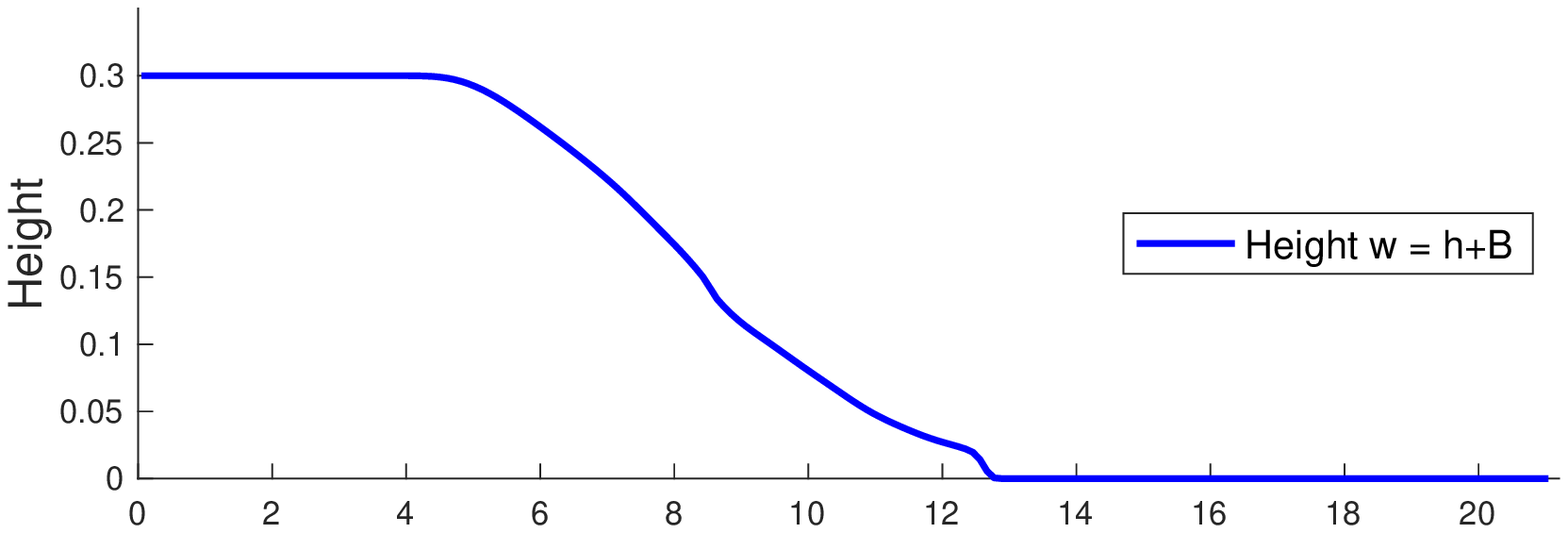}}
\end{center}
\caption{\label{fig:DamBreakRealCaseSigmaHeight} Left panel: Approximated channel's width as a function of $x$ (top view of the channel). The points $P_1$ and $P_2$ indicate the locations where the depth was measured in the corresponding experiment in \cite{khan2014modeling}. Right panel: Total height $w=h+B$ at time $t=4\text{ s}$ with initial conditions \eqref{eq:InitCondRealCase} and the boundary conditions below. }
\end{figure}

In the experiment, the flow is initially given by
\begin{equation}
\label{eq:InitCondRealCase}
u(x,t=0) = 0, w(x,t=0) = 
\left\{
\begin{array}{lcl}
0.3 \text{m} & \text{ if } & x < x_m = 8.5 \text{ m},\\ 
H_{\text{out }}= 10^{-5} \text{ m}& & \text{ otherwise },
\end{array}
\right. 
\end{equation}
which corresponds to a flow initially at rest, where the downstream part of the channel is dry (a threshold value has been used). The gate is assumed to be instantaneously removed. The left boundary is a solid wall. We have used zero Dirichlet left boundary conditions in the velocity and Neumann left boundary conditions for the height. The right boundary extrapolates the data at outflow, and imposes $H_{\text{out}}$ at inflow. Once the dam breaks, the flow evolves as illustrated in  the left panel of Figure \ref{fig:DamBreakRealCaseComp} at $t=4   \text{ s}$. The resolution here is $\Delta x = 21.2 \text{ m}/200$.

The bathymetry is flat $B=0$. So, we are interested in estimating the Manning's friction coefficient, which is unknown. Unfortunately, the depth at two locations ($P_1$ and $P_2$ in Figure \ref{fig:DamBreakRealCaseComp}) are the only quantities reported in this experiment. In \cite{hernandez2016central}, it was found that one good approximation for the Manning's friction coefficient is $n=0.0084 \text{ s}\text{ m}^{-1/3}$. Here we create synthetic velocity data based on this value to approximate the friction coefficient based on those velocity measurements. The purpose of this numerical test is to show that the algorithm works well even in the presence of wet-dry states, in connection with the above experiment. 

 \begin{figure}[h!]
\begin{center}
{\includegraphics[width=0.49 \textwidth]{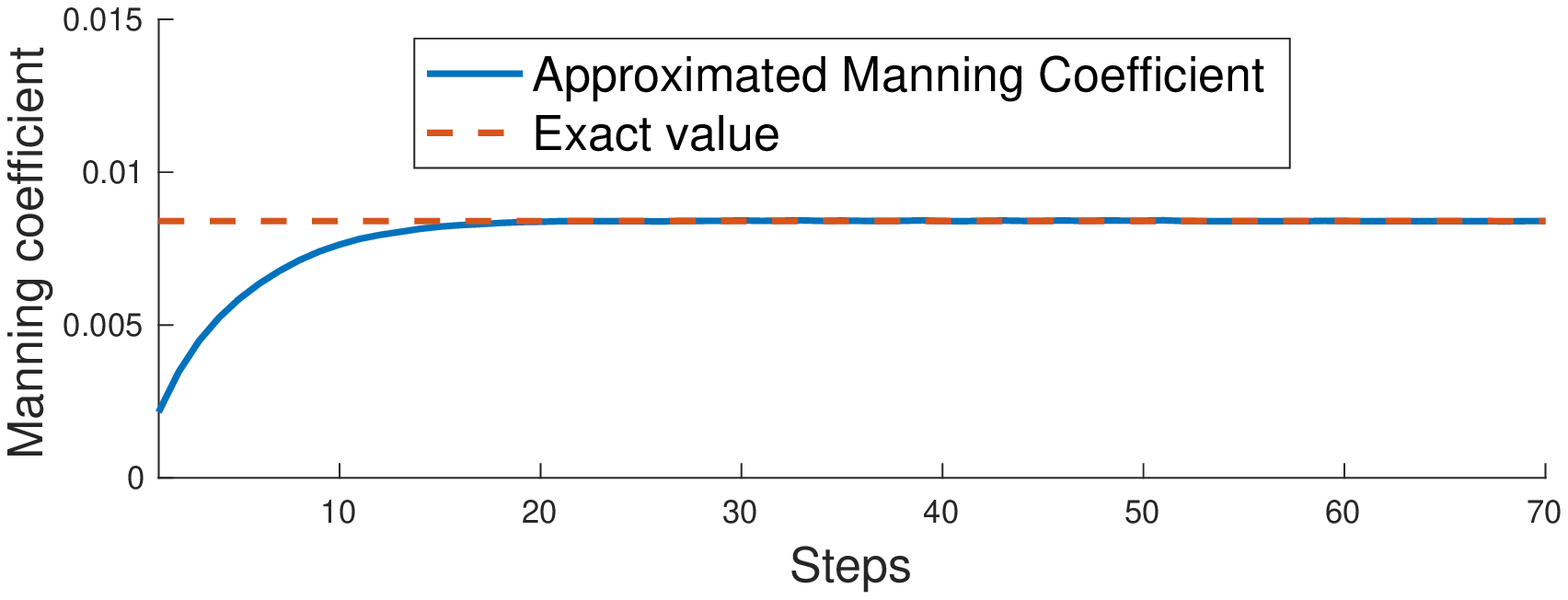}}
{\includegraphics[width=0.49 \textwidth]{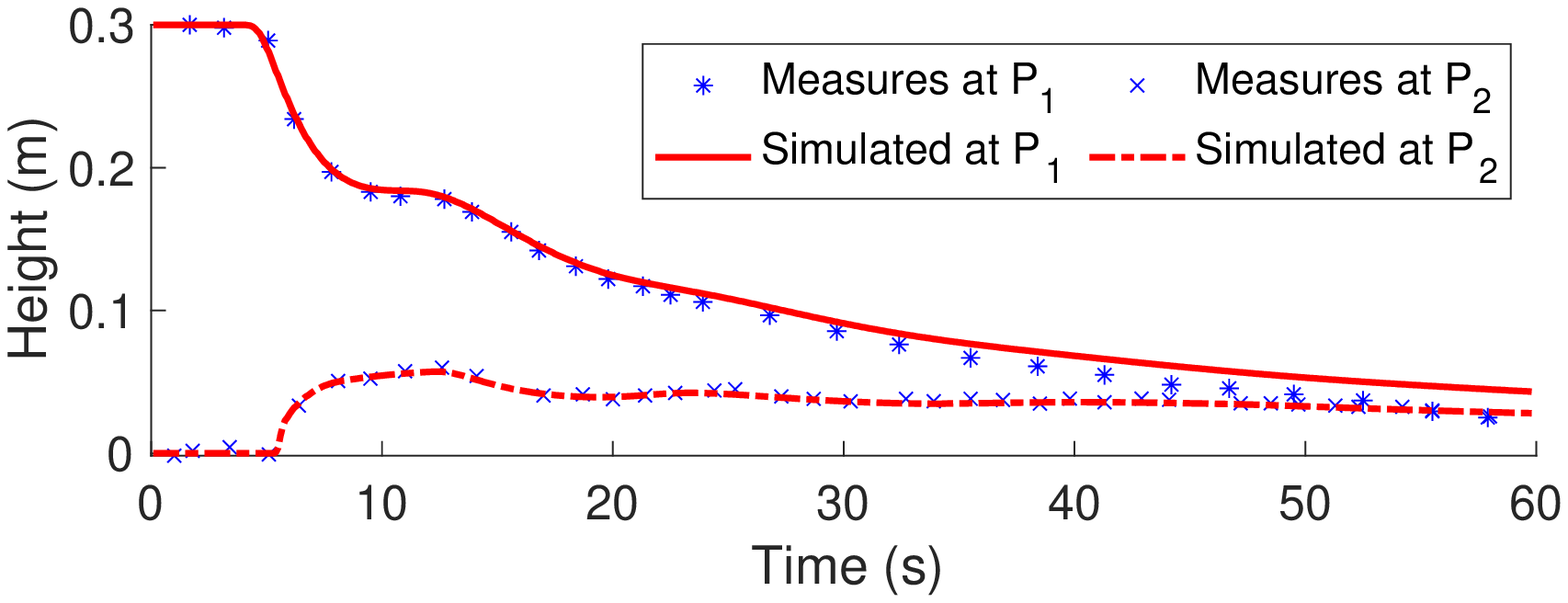}}
\end{center}
\caption{\label{fig:DamBreakRealCaseComp} Left panel: Convergence of the Manning's friction coefficient to the correct value. The approximations are plotted for different steps. Right panel: comparison between experimental data and the numerical approximation obtained by the present schemes of the water's height at two particular locations in $x$, versus time. The left point is located at the left boundary $P_1 = 0$, and the right point is located at $P_2 = 18.5 \text{m}$.}
\end{figure}

The approximated Manning's coefficient are shown for different steps in the left panel of Figure \ref{fig:DamBreakRealCaseComp}. One can observe that the Manning's coefficient is very closed to the exact value after 20 steps in the algorithm. The approximated values were obtained using a time window $[0,T]$ with $T=4 \text{ s}$ before the flow reaches the boundaries. 

In the experimental data in \cite{bellos1992experimental,khan2014modeling}, the height was measured in time at two particular locations. One at the left boundary $P_1 = 0$, and the other one near the right boundary $P_2 = 18.5 \text{ m}$. The right panel in Figure \ref{fig:DamBreakRealCaseComp} compares the real and numerical values. We observe a good agreement, specially at the location $P_2$ near the right boundary, for the entire simulation.  The numerical approximation at $P_1$ is accurate for the first part of the simulation, and overestimates it for the second half. Boundary conditions and the adjustment of the Manning coefficient might affect the predictions. 

\section{Conclusions}

In this work we formulated a constrained optimization problem to estimate the bed channel bathymetry and Manning's friction coefficient from available data of the fluid's velocity. The continuous problem is first presented and analyzed. A quadratic functional which by means of Lagrange multipliers incorporates the shallow water equations is minimized using the Fr\'echet derivative. A continuous descent method is formulated to obtain the minimal solution. Both direct and adjoint systems are proposed to be solved by a second-order Roe-type upwind numerical scheme. However, the algorithm works for any other efficient and robust numerical scheme. We estimate the bathymetry of transient flows as well as the Manning's friction coefficient. Several benchmark problems are presented in order to verify the numerical performance of the proposed method. A  simple steady-state case is first formulated to verify the reliability of the algorithm before transient flows can be treated. In this first case we considered a steady state velocity and a bathymetry bump with a sinusoidal perturbation. In a second test we considered a transient flow consisting of a right-going perturbation to a steady state. Finally, we simultaneously estimated both the bathymetry and the Manning's friction coefficient in a channel with varying width and discontinuous top surface and a numerical test was presented to estimate the Manning's coefficient in the presence of wet-dry states, motivated by experimental data. We obtained very accurate approximations of the bathymetry in all cases. 

We have provided an algorithm that works very well even in transient flows in channels with vertical walls of varying width, discontinuous top surfaces, and even wet-dry states. The need for a good initial guess and the empirical initial coefficient ($\alpha_k$) in the search direction are often limitations for approaches like the one presented here. However, we have shown that our algorithm is not very sensitive to those parameters and we provided criteria to choose the best coefficient $\alpha_k$ together with conditions to stop our algorithm. Furthermore, our setting is flexible and may be adapted to estimate other parameters or systems.  \\


\noindent
{\bf Acknowledgements:}

Research supported in part by grants UNAM-DGAPA-PAPIIT IN113019 \& Conacyt A1-S-17634.

\appendix

\section{Appendix. Gradient by the Adjoint State Method}
\label{sec:AppendixGradient}

In this appendix, we compute an expression for the Fr\'{e}chet derivative of $J$.

\begin{lemma} Let  $h(B,n)$, $u(B,n)$ solve the shallow water equations for given $(B,n)$. Let $H=Dh(B,n)(\xi_1,\xi_2)$ and $U=Du(B,n)(\xi_1,\xi_2)$. Then 
\begin{equation}
 \begin{array}{lcl}
DJ(B,n)(\xi_1,\xi_2)  & = &   \langle\mu,g\sigma h(\xi_1)_x\rangle\,
+ \langle\mu,2g\frac{\sigma h}{R^{4/3}}u\sqrt{u^2+\varepsilon}\rangle\xi_2 
  + \langle\lambda,(\sigma H)_t+(\sigma uH )_x\rangle \, \\
 & & +  \langle\mu,(\sigma uH)_t+(\sigma u^2H +g\sigma hH)_x\rangle \, +  \langle\mu,-gh\sigma_xH+g\sigma B_xH\rangle \, \\
 & & + \langle\mu,gn^2\left(\frac{1}{h}+\frac{2}{\sigma}\right)^{1/3}\left(2-\frac{1}{3}\frac{\sigma}{h}\right) u\sqrt{u^2+\varepsilon}\, H\rangle \\
  & & + \langle U,\mathcal{M}^*(\mathcal{M}u-\hat{u}) \, 
+ \langle\lambda,(\sigma h U)_x\rangle \\
& & + \langle\mu,(\sigma h U)_t+2(\sigma huU)_x+gn^2\frac{\sigma h}{R^{4/3}}\frac{2u^2+\varepsilon}{\sqrt{u^2+\varepsilon}}U\rangle.
 \end{array}
 \end{equation}
\end{lemma}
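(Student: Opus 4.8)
The plan is to exploit that the Lagrangian agrees with the functional along solutions of the direct problem and then differentiate by the chain rule. First I would observe that, since $(h,u)=(h(B,n),u(B,n))$ solves the shallow water system \eqref{eq:SWSigma}, the constraint residual paired against $(\lambda,\mu)$ vanishes identically, so $\mathcal{L}(B,n,h(B,n),u(B,n),\lambda,\mu)=J(B,n)$ for \emph{every} choice of multipliers $\lambda,\mu$. Assuming the postulated Fr\'echet differentiability of $(B,n)\mapsto(h,u)$, and hence of $J$, the chain rule gives
\[
DJ(B,n)(\xi_1,\xi_2)=\partial_B\mathcal{L}\,\xi_1+\partial_n\mathcal{L}\,\xi_2+\partial_h\mathcal{L}\,H+\partial_u\mathcal{L}\,U,
\]
with $H=Dh(B,n)(\xi_1,\xi_2)$ and $U=Du(B,n)(\xi_1,\xi_2)$; the only delicate point here is that one may differentiate under the $L^2(\Omega\times(0,T))$ inner product, which follows from continuity of the integrand in the parameters.

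Next I would compute the four partial derivatives term by term. The data-misfit term $\tfrac12\Vert\mathcal{M}u-\hat u\Vert^2$ depends on $(B,n)$ only through $u$, and since $\mathcal{M}$ is linear its derivative in the direction $U$ is $\langle\mathcal{M}U,\mathcal{M}u-\hat u\rangle=\langle U,\mathcal{M}^*(\mathcal{M}u-\hat u)\rangle$. In the constraint pairing, the explicit $B$-dependence sits only in the term $g\sigma h B_x$, contributing $\langle\mu,g\sigma h(\xi_1)_x\rangle$, and the explicit $n$-dependence sits only in the friction term $gn^2\tfrac{\sigma h}{R^{4/3}}u\sqrt{u^2+\varepsilon}$, contributing $\langle\mu,2gn\tfrac{\sigma h}{R^{4/3}}u\sqrt{u^2+\varepsilon}\rangle\xi_2$ (the scalar $\xi_2$ factoring out of the inner product).

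Then I would linearize the remaining $h$- and $u$-occurrences in the residual. For the $h$-derivative I replace $h$ by $H$ in $(\sigma h)_t$, $(\sigma hu)_x$, $(\sigma hu^2+\tfrac g2\sigma h^2)_x$, $-\tfrac g2 h^2\sigma_x$ and $g\sigma hB_x$, which produces the flux/source linearizations $(\sigma H)_t+(\sigma uH)_x$ against $\lambda$ and $(\sigma uH)_t+(\sigma u^2H+g\sigma hH)_x+(-gh\sigma_x+g\sigma B_x)H$ against $\mu$; for the friction coefficient I use the identity $R^{-4/3}=\left(\tfrac1h+\tfrac2\sigma\right)^{4/3}$, so that $\tfrac{d}{dh}\!\left(\tfrac{\sigma h}{R^{4/3}}\right)=\left(\tfrac1h+\tfrac2\sigma\right)^{1/3}\!\left(2-\tfrac13\tfrac\sigma h\right)$, giving the last $H$-term. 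For the $u$-derivative I replace $u$ by $U$, using $\tfrac{d}{du}\big(u\sqrt{u^2+\varepsilon}\big)=\tfrac{2u^2+\varepsilon}{\sqrt{u^2+\varepsilon}}$ in the friction term; this yields $(\sigma hU)_x$ against $\lambda$ and $(\sigma hU)_t+2(\sigma huU)_x+gn^2\tfrac{\sigma h}{R^{4/3}}\tfrac{2u^2+\varepsilon}{\sqrt{u^2+\varepsilon}}U$ against $\mu$.

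Finally I would collect the pieces, grouping the two $\xi$-explicit contributions, the $\lambda/\mu$ pairings linear in $H$, the term $\langle U,\mathcal{M}^*(\mathcal{M}u-\hat u)\rangle$, and the $\lambda/\mu$ pairings linear in $U$, which reproduces the stated formula. I expect the only real obstacle to be the bookkeeping: carefully linearizing the nonlinear convective flux and, especially, simplifying the derivative of the hydraulic-radius factor $\sigma h/R^{4/3}$ into the compact form $\left(\tfrac1h+\tfrac2\sigma\right)^{1/3}\!\left(2-\tfrac13\tfrac\sigma h\right)$; everything else is a routine use of the chain rule and linearity of $\mathcal{M}$. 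Integration by parts in the $H$- and $U$-terms, together with the adjoint equations \eqref{eq:AdjointEqn}--\eqref{eq:MuInitCond}, is then what converts this lemma into Theorem \ref{th:Lagrange}, and is deferred to the subsequent step.
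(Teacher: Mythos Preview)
Your proposal is correct and follows essentially the same route as the paper: both arguments identify $J(B,n)=\mathcal{L}(B,n,h(B,n),u(B,n),\lambda,\mu)$ along solutions of the direct problem, apply the chain rule to split $DJ$ into the four partial Fr\'echet derivatives $\partial_B\mathcal{L},\partial_n\mathcal{L},\partial_h\mathcal{L},\partial_u\mathcal{L}$, and then compute each by linearizing the constraint residual term by term (including the same hydraulic-radius simplification $R^{-4/3}=(1/h+2/\sigma)^{4/3}$ and the derivative of $u\sqrt{u^2+\varepsilon}$). Your remark that integration by parts and the adjoint system are deferred to Theorem~\ref{th:Lagrange} is exactly how the paper structures the argument as well.
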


\begin{proof}
First, we note that
\[
\begin{array}{lcl}
D\mathcal{L}(B,n,h,u,\lambda,\mu)(\xi_1,\xi_2,\xi_3,\xi_4)  
& = &  D_1\mathcal{L}(B,n,h,u,\lambda,\mu)\xi_1\, 
+  D_2\mathcal{L}(B,n,h,u,\lambda,\mu)\xi_2 +\\

 & &  D_3\mathcal{L}(B,n,h,u,\lambda,\mu)\xi_3\, +
 D_4\mathcal{L}(B,n,h,u,\lambda,\mu)\xi_4.
 \end{array}
 \]
 But,
 \[
D_1\mathcal{L}(B,k,h,u,\lambda,\mu)\xi_1 =   \langle\mu,g\sigma h(\xi_1)_x\rangle ,
 \]
 \[
  D_2\mathcal{L}(B,k,h,u,\lambda,\mu)\xi_2 = \langle\mu,2g\frac{\sigma h}{R^{4/3}}u\sqrt{u^2+\varepsilon}\rangle\xi_2,
 \]
 \[
 \begin{array}{lcl}
D_3\mathcal{L}(B,k,h,u,\lambda,\mu)\xi_3  & = & \langle\lambda,(\sigma \xi_3)_t+(\sigma u\xi_3 )_x\rangle \, + 
  \langle\mu,(\sigma u\xi_3)_t+(\sigma u^2\xi_3 +g\sigma h\xi_3)_x\rangle \, \\
 & & + \langle\mu,-gh\sigma_x\xi_3+g\sigma B_x\xi_3\rangle \, +
  \langle\mu,gn^2\left(\frac{1}{h}+\frac{2}{\sigma}\right)^{1/3}\left(2-\frac{1}{3}\frac{\sigma}{h}\right) u\sqrt{u^2+\varepsilon}\,\xi_3\rangle,
\end{array}
 \]
\[
D_4\mathcal{L}(B,k,h,u,\lambda,\mu)\xi_4 =  
\langle\xi_4,\mathcal{M}^*(\mathcal{M}u-\hat{u}) \, 
+ \langle\lambda,(\sigma h\xi_4)_x\rangle 
 + \langle\mu,(\sigma h\xi_4)_t+2(\sigma hu\xi_4)_x+gn^2\frac{\sigma h}{R^{4/3}}\frac{2u^2+\varepsilon}{\sqrt{u^2+\varepsilon}}\xi_4\rangle .
 \]

Let $W(B,n)=(B,n,h(B,n),u(B,n))$. Then
\[
J(B,n)=\mathcal{L}(W(B,n)).
\]

By the chain rule,
\begin{equation}
\begin{array}{lcl}
DJ(B,n)(\xi_1,\xi_2) & = & D\mathcal{L}(W(B,n))DW(B,n)(\xi_1,\xi_2) \\
& = & D\mathcal{L}(W(B,n))(\xi_1,\xi_2,Dh(B,n)(\xi_1,\xi_2),Du(B,n)(\xi_1,\xi_2)) \\
& \equiv &   D\mathcal{L}(W(B,n))(\xi_1,\xi_2,H,U).
\end{array}
 \end{equation}
 
 This leads to
\begin{equation}
 \label{DJ_B}
 \begin{array}{lcl}
DJ(B,k)(\xi_1,\xi_2)  & = & \langle\mu,g\sigma h(\xi_1)_x\rangle\,
+ \langle\mu,2g\frac{\sigma h}{R^{4/3}}u\sqrt{u^2+\varepsilon}\rangle\xi_2  + \langle\lambda,(\sigma H)_t+(\sigma uH )_x\rangle \,  \\
 & & + \langle\mu,(\sigma uH)_t+(\sigma u^2H +g\sigma hH)_x\rangle \, +  \langle\mu,-gh\sigma_xH+g\sigma B_xH\rangle \, \\
 & & + \langle\mu,gn^2\left(\frac{1}{h}+\frac{2}{\sigma}\right)^{1/3}\left(2-\frac{1}{3}\frac{\sigma}{h}\right) u\sqrt{u^2+\varepsilon}\, H\rangle \\
  & & + \langle U,\mathcal{M}^*(\mathcal{M}u-\hat{u}) \, + \langle\lambda,(\sigma h U)_x\rangle \\
& & + \langle\mu,(\sigma h U)_t+2(\sigma huU)_x+gn^2\frac{\sigma h}{R^{4/3}}\frac{2u^2+\varepsilon}{\sqrt{u^2+\varepsilon}}U\rangle.
 \end{array}
\end{equation}
as required.
\end{proof}

We now proceed to prove Theorem \ref{th:Lagrange}.

\begin{proof}\textbf{(Theorem \ref{th:Lagrange})}

Since Fr\'{e}chet differentiability implies Gateaux differentiability, we have
\[
Dh(B,n)(\xi_1,\xi_2)=lim_{\varepsilon\to 0}\frac{h((B,n)+\varepsilon(\xi_1,\xi_2))-h(B,n)}{\varepsilon},
\]
and similarly
\[
Du(B,n)(\xi_1,\xi_2)=lim_{\varepsilon\to 0}\frac{u((B,n)+\varepsilon(\xi_1,\xi_2))-u(B,n)}{\varepsilon}.
\]

It follows that the functions $H\equiv Dh(B,n)(\xi_1,\xi_2)$ and $U\equiv Du(B,n)(\xi_1,\xi_2)$, are zero where  initial and boundary data are given. Namely
\[
H(x,0)=0=U(x,0),\quad H(a,t)=0=U(a,t).
\]

Let us assume that $\xi_1\in C^\infty_c(\Omega)$. Integrating by parts in (\ref{DJ_B}), we obtain
\[
DJ(B,n)(\xi_1,\xi_2) =  \langle\xi_1,-\int_0^T(g\sigma h\mu)_x\, dt\rangle\,
+ \langle\mu,2g\frac{\sigma h}{R^{4/3}}u\sqrt{u^2+\varepsilon}\rangle\xi_2  + L_1(H) + L_2(U) + BdS + BdT,
\]
where

\begin{dmath}
 L_1(H)  =  \langle H,-\sigma \lambda_t-\sigma u\lambda_x\rangle \, +
   \langle H,-\sigma u\mu_t-(\sigma u^2 +g\sigma h)\mu_x\rangle \, +
  \langle H,-gh\sigma_x\mu+g\sigma B_x\mu\rangle \, + 
 \langle H,gn^2\left(\frac{1}{h}+\frac{2}{\sigma}\right)^{1/3}\left(2-\frac{1}{3}\frac{\sigma}{h}\right) u\sqrt{u^2+\varepsilon}\, \mu\rangle ,
\end{dmath}

\begin{dmath}
 L_2(U)  = \langle U,\mathcal{M}^*(\mathcal{M}u-\hat{u}) \, 
+ \langle U,-\sigma h \lambda_x\rangle 
 + \langle U,-\sigma h \mu_t-2\sigma hu\mu_x+gn^2\frac{\sigma h}{R^{4/3}}\frac{2u^2+\varepsilon}{\sqrt{u^2+\varepsilon}}\mu\rangle ,
\end{dmath}

\begin{equation}
\begin{array}{lcl}
BdS & = & 
\left(\left.\xi_1\int_0^T\left(\mu g\sigma h\right) dt\right)\right\vert_a^b+
\int_0^T\left. \sigma h\left(\lambda +2\mu u\right)U\right\vert_a^b dt \, 
+ \int_0^T\left.\sigma\left[\lambda u +\mu(u^2+gh)\right]H\right\vert_a^b dt, 
\end{array}
\end{equation}
and 
\begin{equation}
BdT =
 \int_a^b\left.\sigma\left[\mu hU+(\lambda+\mu u )H \right]\right\vert_0^T dx.
\end{equation}

Since $h$ and $u$ are given at $t=0$ and $x=a$, we set the adjoint variables $\lambda$ and $\mu$
as null at $t=T$ and $x=b$. Namely
\[
\lambda(x,T)=\mu(x,T)=0,\quad x\in (a,b),
\]
\[
\lambda(b,t)=\mu(b,t)=0, \quad t\in (0,T).
\]

Since $\xi_1\in C^\infty_c(\Omega)$,  we obtain
\[
BdS  = 0 = BdT,
\]
and requiring all terms involving $H$ and $U$ to be null, we are led to the adjoint equations. 
Since  $C^\infty_c(\Omega)$ is dense in $L^2(\Omega)$ we obtain the gradient as required.
\end{proof}


\bibliography{References.bib}

\end{document}